\newcommand{\titel}{Mondshein Sequences (a.k.a.\ $(2,1)$-Orders)}
\definecolor{hellblau}{rgb}{0.2,0.4,1} 
\definecolor{dunkelblau}{rgb}{0,0,0.8}
\definecolor{dunkelgruen}{rgb}{0,0.5,0}
\theoremstyle{plain} 
	\newtheorem{satz}{Satz}[] 
	\newtheorem{theorem}[satz]{Theorem}
	\newtheorem{lemma}[satz]{Lemma}
	\newtheorem{observation}[satz]{Observation}
\theoremstyle{remark} 
	\newtheorem*{remark}{\textbf{Remark}} 
\theoremstyle{definition} 
	\newtheorem{definition}[satz]{Definition}
	\newtheorem*{conjecture}{Conjecture}
\begin{document}
	\title{\titel}
		\author{Jens M. Schmidt\\
		Institute of Mathematics\\
		TU Ilmenau\footnote{This research was partly done at Max Planck Institute for Informatics, Saarbrücken.
		An extended abstract of this paper has been published at ICALP'14.}}
	\date{}
	\maketitle

\begin{abstract}
Canonical orderings [STOC'88, FOCS'92] have been used as a key tool in graph drawing, graph encoding and visibility representations for the last decades. We study a far-reaching generalization of canonical orderings to non-planar graphs that was published by Lee Mondshein in a PhD-thesis at M.I.T.\ as early as 1971.

Mondshein proposed to order the vertices of a graph in a sequence such that, for any $i$, the vertices from $1$ to $i$ induce essentially a $2$-connected graph while the remaining vertices from $i+1$ to $n$ induce a connected graph. Mondshein's sequence generalizes canonical orderings and became later and independently known under the name \emph{non-separating ear decomposition}. Surprisingly, this fundamental link between canonical orderings and non-separating ear decomposition has not been established before. Currently, the fastest known algorithm for computing a Mondshein sequence achieves a running time of $O(nm)$; the main open problem in Mondshein's and follow-up work is to improve this running time to subquadratic time.

After putting Mondshein's work into context, we present an algorithm that computes a Mondshein sequence in optimal time and space $O(m)$. This improves the previous best running time by a factor of $n$. We illustrate the impact of this result by deducing linear-time algorithms for five other problems, for four out of which the previous best running times have been quadratic. In particular, we show how to
\begin{itemize}
	\item[--] compute three independent spanning trees in a $3$-connected graph in time $O(m)$, improving a result of Cheriyan and Maheshwari [J. Algorithms 9(4)],
	\item[--] improve the preprocessing time from $O(n^2)$ to $O(m)$ for the output-sensitive data structure by Di Battista, Tamassia and Vismara [Algorithmica 23(4)] that reports three internally disjoint paths between any given vertex pair,
	\item[--] derive a very simple $O(n)$-time planarity test once a Mondshein sequence has been computed,
	\item[--] compute a nested family of contractible subgraphs of $3$-connected graphs in time $O(m)$,
	\item[--] compute a $3$-partition in time $O(m)$, while the previous best running time is $O(n^2)$ due to Suzuki et al.\ [IPSJ 31(5)].
\end{itemize}
\end{abstract}

\section{Introduction}
Canonical orderings are a fundamental tool used in graph drawing, graph encoding and visibility representations; we refer to~\cite{Badent2011} for a wealth of applications. For maximal planar graphs, canonical orderings were introduced by de Fraysseix, Pach and Pollack~\cite{Fraysseix1988,Fraysseix1990} in 1988. Kant then generalized canonical orderings to $3$-connected planar graphs~\cite{Kant1992,Kant1996}. In polyhedral combinatorics, canonical orders are in addition related to shellings of (dual) convex 3-dimensional polytopes~\cite{Ziegler1998}; however, such shellings are often, as in the Bruggesser-Mani theorem, dependent on the geometry of the polytope. A combinatorial generalization to arbitrary planar graphs was given by Chiang, Lin and Lu~\cite{Chiang2005}.

Surprisingly, the concept of canonical orderings can be traced back much further, namely to a long-forgotten PhD-thesis at {M.I.T.}\ by Lee F. Mondshein~\cite{Mondshein1971} in 1971. In fact, Mondshein proposed a sequence that generalizes canonical orderings to non-planar graphs, hence making them applicable to arbitrary $3$-connected graphs. \emph{Mondshein's sequence} was, independently and in a different notation, found later by Cheriyan and Maheshwari~\cite{Cheriyan1988} under the name \emph{non-separating ear decompositions} and is sometimes also called \emph{(2,1)-order} (e.g., see~\cite{Biedl2016}). In addition, Mondshein sequences provide a generalization of Schnyde{r's} famous woods to non-planar 3-connected graphs. One key contribution of this paper is to establish the above fundamental link between canonical orderings and non-separating ear decompositions in detail.

Computationally, it is an intriguing question how fast a Mondshein sequence can be computed. Mondshein himself gave an involved algorithm with running time $O(m^2)$. Cheriyan showed that it is possible to achieve a running time of $O(nm)$ by using a theorem of Tutte that proves the existence of non-separating cycles in $3$-connected graphs~\cite{Tutte1963}. Both works state as main open problem, whether it is possible to compute a Mondshein sequence in subquadratic time (see~\cite[p.\ 1.2]{Mondshein1971} and~\cite[p.\ 532]{Cheriyan1988}).

We present the first algorithm that computes a Mondshein sequence in optimal time and space $O(m)$, hence solving the above 40-year-old problem. The interest in such a computational result stems from the fact that $3$-connected graphs play a crucial role in algorithmic graph theory. We illustrate this in five applications by giving linear-time algorithms. For four of them, the previous best running times have been quadratic.

We start by giving an overview of Mondshein's work and its connection to canonical orderings and non-separating ear decompositions in Section~\ref{sectiongeneralizing}. Section~\ref{sec:computing} explains the linear-time algorithm and proves its main technical lemma, the Path Replacement Lemma. Section~\ref{sec:applications} covers five applications of our linear-time algorithm.

\section{Preliminaries}
We use standard graph-theoretic terminology and assume that all graphs are simple.

\begin{definition}[\cite{Lovasz1985,Whitney1932a}]
An \emph{ear decomposition} of a graph $G=(V,E)$ is a sequence $(P_0,P_1,\ldots,P_k)$ of subgraphs of $G$ that partition $E$ such that $P_0$ is a cycle and every $P_i$, $1 \leq i \leq k$, is a path that intersects $P_0 \cup \cdots \cup P_{i-1}$ in exactly its endpoints. Each $P_i$ is called an \emph{ear}. An ear is \emph{short} if it is an edge and \emph{long} otherwise.
\end{definition}

According to Whitney~\cite{Whitney1932a}, every ear decomposition has exactly $m-n+1$ ears and $G$ has an ear decomposition if and only if $G$ is 2-connected. For any $i$, let $G_i := P_0 \cup \cdots \cup P_i$ and $\overline{V_i} := V - V(G_i)$. We write $\overline{G_i}$ to denote the graph induced by $\overline{V_i}$. Note that $\overline{G_i}$ does not necessarily contain all edges in $E-E(G_i)$; in particular, there may be short ears in $E-E(G_i)$ that have both endpoints in $G_i$.

For a path $P$ and two vertices $x$ and $y$ in $P$, let \emph{$P[x,y]$} be the subpath in $P$ from $x$ to $y$. A path with endpoints $v$ and $w$ is called a \emph{$vw$-path}. A vertex $x$ in a $vw$-path $P$ is an \emph{inner vertex} of $P$ if $x \notin \{v,w\}$. For convenience, every vertex in a cycle is called an \emph{inner vertex} of that cycle.

For an ear $P$, let $inner(P)$ the set of its inner vertices. The inner vertex sets of the ears in an ear decomposition of $G$ play a special role, as they partition $V$. Every vertex of $G$ is contained in exactly one long ear as inner vertex. This gives readily the following characterization of $\overline{V_i}$.

\begin{observation}\label{observation}
For every $i$, $\overline{V_i}$ is the union of the inner vertices of all long ears $P_j$ with $j > i$.
\end{observation}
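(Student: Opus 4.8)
The plan is to reduce the statement to a single clean identity, namely that $V(G_i)$ equals the union of the inner-vertex sets of all long ears $P_j$ with $j \le i$. Once this is established, the Observation follows immediately: since (as noted just before the statement) every vertex of $G$ lies in the inner-vertex set of exactly one long ear, these inner-vertex sets partition $V$, and complementing within this partition turns the condition ``$j \le i$'' into ``$j > i$''. Concretely, $\overline{V_i} = V - V(G_i)$ is then the union of exactly those inner-vertex sets of long ears not accounted for by indices $\le i$, i.e. precisely those $P_j$ with $j > i$.

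First I would dispatch the easy inclusion $\bigcup_{j \le i,\ P_j \text{ long}} inner(P_j) \subseteq V(G_i)$. This is immediate from the definitions: for every long ear $P_j$ with $j \le i$ we have $inner(P_j) \subseteq V(P_j) \subseteq V(G_i)$, since $G_i = P_0 \cup \cdots \cup P_i$.

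The reverse inclusion carries the actual content, and I would phrase it in terms of the \emph{first appearance} of each vertex in the sequence. The structural fact I would exploit is that for each $j \ge 1$ the ear $P_j$ meets $G_{j-1}$ in exactly its two endpoints; hence every inner vertex of $P_j$ is new, i.e. does not lie in $V(G_{j-1})$. It follows that the unique long ear containing a given vertex $v$ as an inner vertex is exactly the first ear (in sequence order) in which $v$ appears at all: a vertex cannot make its debut as an endpoint of an ear, because endpoints already belong to the graph built from the earlier ears, and short ears contribute no inner vertices. Consequently, if $v \in V(G_i)$ then $v$ appears in some $P_j$ with $j \le i$, so the long ear of which $v$ is an inner vertex has index at most $i$, giving $v \in \bigcup_{j \le i,\ P_j \text{ long}} inner(P_j)$.

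The main obstacle, modest as it is, is precisely this reverse inclusion: one must carefully distinguish a vertex appearing as an endpoint of an ear from appearing as an inner vertex, and argue that the first appearance is necessarily as an inner vertex of a long ear. This hinges entirely on the defining property that each ear meets the previously built graph only in its endpoints, together with the already-granted fact that the inner-vertex sets of the long ears partition $V(G)$; the remaining passage to $\overline{V_i}$ is a routine set-complement computation using this disjointness.
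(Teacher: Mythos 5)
Your proof is correct and follows essentially the same route as the paper, which treats the observation as an immediate consequence of the fact (stated just before it) that the inner-vertex sets of the long ears partition $V(G)$: a vertex lies in $G_i$ exactly when the long ear containing it as an inner vertex has index at most $i$, and complementing gives the claim. Your write-up merely makes explicit the step the paper leaves implicit --- that a vertex's first appearance must be as an inner vertex of a long ear, since ears meet the previously built graph only in their endpoints --- so there is no substantive difference.
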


We will compare vertices and edges of $G$ by their first occurrence in a fixed ear decomposition.

\begin{definition}
Let $D = (P_0,P_1,\ldots,P_{m-n})$ be an ear decomposition of $G$. For an edge $e \in G$, let $birth_D(e)$ be the index $i$ such that $P_i$ contains $e$. For a vertex $v \in G$, let $birth_D(v)$ be the minimal $i$ such that $P_i$ contains $v$ (thus, $P_{birth_D(v)}$ is the ear containing $v$ as an inner vertex). Whenever $D$ is clear from the context, we will omit $D$.
\end{definition}

Clearly, for every vertex $v$, the ear $P_{birth(v)}$ is long, as it contains $v$ as an inner vertex.

\section{Generalizing Canonical Orderings}\label{sectiongeneralizing}
Although canonical orderings of (maximal or 3-connected) planar graphs are traditionally defined as vertex partitions, we will define them as special ear decompositions. This will allow for an easy comparison of canonical orderings to the more general Mondshein sequences, which extend them to non-planar graphs. We assume that the input graphs are $3$-connected and, when talking about canonical orderings, planar. It is well-known that maximal planar graphs (which were considered in~\cite{Fraysseix1988} in this setting) form a subclass of $3$-connected graphs, apart from the triangle-graph.

\begin{definition}\label{def:nonseparating}
An ear decomposition is \emph{non-separating} if, for every long ear $P_i$ except the last one, every inner vertex of $P_i$ has a neighbor in $\overline{G_i}$.
\end{definition}

The name \emph{non-separating} refers to the following helpful property.

\begin{lemma}\label{lem:connected}
In a non-separating ear decomposition $D$, $\overline{G_i}$ is connected for every $i$.
\end{lemma}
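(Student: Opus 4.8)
The plan is to prove the statement by backward induction on $i$, running from $i = m-n$ down to $i = 0$. For the base case $i = m-n$ we have $G_{m-n} = P_0 \cup \cdots \cup P_{m-n} = G$, hence $\overline{V_{m-n}} = V - V(G) = \emptyset$ and $\overline{G_{m-n}}$ is the empty graph, which I treat as (vacuously) connected.

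For the inductive step I would assume $\overline{G_i}$ is connected and deduce that $\overline{G_{i-1}}$ is connected. The first thing to pin down is how the two vertex sets relate. Since $P_i$ meets $G_{i-1}$ in exactly its endpoints, the vertices newly introduced by $P_i$ are precisely $inner(P_i)$; hence $V(G_i) = V(G_{i-1}) \cup inner(P_i)$ as a disjoint union, which gives $\overline{V_{i-1}} = \overline{V_i} \cup inner(P_i)$, again disjoint.

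Next I would split into cases. If $P_i$ is a short ear, then $inner(P_i) = \emptyset$, so $\overline{G_{i-1}} = \overline{G_i}$ and connectivity is inherited directly. If $P_i$ is long, then its inner vertices form a path, and the edges of $P_i$ joining consecutive inner vertices lie in $E - E(G_{i-1})$ and have both endpoints in $inner(P_i) \subseteq \overline{V_{i-1}}$; therefore they survive in the induced subgraph $\overline{G_{i-1}}$, so $inner(P_i)$ spans a connected path inside $\overline{G_{i-1}}$. It remains to glue this path to the rest. If $\overline{G_i} = \emptyset$, then $\overline{V_{i-1}} = inner(P_i)$ and we are done. Otherwise I invoke the non-separating property at index $i$: every inner vertex of $P_i$ has a neighbor in $\overline{G_i}$, and such a connecting edge again has both endpoints in $\overline{V_{i-1}}$, so it belongs to $\overline{G_{i-1}}$. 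Combining the connected induced subgraph $\overline{G_i}$, the connected path on $inner(P_i)$, and at least one edge from each inner vertex into $\overline{G_i}$, the whole of $\overline{G_{i-1}}$ is connected.

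The main thing to be careful about is that $\overline{G_i}$ is an \emph{induced} subgraph, and, as the preliminaries warn, such subgraphs may drop edges of $E - E(G_i)$. Thus the crux of the argument is the bookkeeping verifying that every edge I rely on — the path edges of $P_i$ and the neighbor edges supplied by the non-separating property — genuinely has both endpoints in $\overline{V_{i-1}}$ and hence is present in $\overline{G_{i-1}}$. Once this is checked, the connectivity claim follows immediately.
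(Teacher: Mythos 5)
Your proof is correct, but it takes a genuinely different route from the paper's. You argue by backward induction on the prefix index $i$: starting from $\overline{G_{m-n}}=\emptyset$, you show that passing from $\overline{G_i}$ to $\overline{G_{i-1}}$ adds the path $inner(P_i)$, which is tied to $\overline{G_i}$ by the edges guaranteed by the non-separating property; your explicit bookkeeping that every edge you use (consecutive inner vertices of $P_i$, and each inner vertex's neighbor in $\overline{G_i}$) has both endpoints in $\overline{V_{i-1}}$, and hence survives in the induced subgraph, is exactly the right care, and your handling of short ears and of the degenerate case $\overline{G_i}=\emptyset$ is sound. The paper instead fixes a single target vertex $u$, an inner vertex of the last long ear, and proves the stronger pointwise statement that every $x$ in $\overline{G_i}$ has a path to $u$ inside $\overline{G_i}$: if $x$ is an inner vertex of $P_{birth(u)}$, one walks along that ear; otherwise the non-separating property gives $x$ a neighbor in $\overline{G_{birth(x)}}$, which by Observation~\ref{observation} is an inner vertex of an ear with strictly larger index, so induction on that index chains $x$ up to $u$. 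Both arguments run on the same engine (the non-separating property plus the fact that $\overline{V_i}$ is a union of inner-vertex sets of later long ears), but your induction is over the graphs $\overline{G_i}$ themselves, while the paper's is over vertices within a fixed $\overline{G_i}$. What the paper's formulation buys is algorithmic content: the walk ``repeatedly move to a neighbor born later until reaching $u$'' is reused almost verbatim in Application~1 to build the third independent spanning tree $T_3$. What your formulation buys is structural clarity: it isolates precisely what each ear contributes to connectivity and makes the induced-subgraph subtlety explicit, without needing to single out a special vertex.
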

\begin{proof}
For all $i$ satisfying $\overline{G_i} = \emptyset$ the claim is true, in particular if $i$ is at least the index of the last long ear. Otherwise, $i$ is such that the inner vertex set $A$ of the last long ear in $D$ is contained in $\overline{G_i}$. Consider any vertex $x$ in $\overline{G_i}$. In order to show connectedness, we exhibit a path from $x$ to $A$ in $\overline{G_i}$. If $x \in A$, we just take the path of length zero. Otherwise, the vertex $x$ has a neighbor in $\overline{G_{birth(x)}}$, since $D$ is non-separating. According to Observation~\ref{observation}, this neighbor is an inner vertex of some ear $P_j$ with $j > birth(x)$. Applying induction on $j$ gives the desired path to $A$.
\end{proof}

A \emph{plane graph} is a graph that is embedded into the plane. In particular, a plane graph has a fixed outer face. We define canonical orderings as follows.

\begin{definition}[canonical ordering]\label{def:canonicalordering}
Let $G$ be a $3$-connected plane graph and let $rt$ and $ru$ be edges of its outer face. A \emph{canonical ordering} through $rt$ and avoiding $u$ is an ear decomposition $D$ of $G$ such that
\begin{compactenum}
	\item[1.] $rt \in P_0$,
	\item[2.] $P_{birth(u)}$ is the last long ear, contains $u$ as its only inner vertex and does not contain $ru$, and
	\item[3.] $D$ is non-separating.
\end{compactenum}
\end{definition}

The fact that $D$ is non-separating plays a key role for both canonical orderings and their generalization to non-planar graphs. E.g., Lemma~\ref{lem:connected} implies that the plane graph $G$ can be constructed from $P_0$ by successively inserting the ears of $D$ to only one dedicated face of the current embedding, a routine that is heavily applied in graph drawing and embedding problems. Put simply, the second condition forces $u$ to be ``added last'' in $D$. Further motivations are given by 3-connectivity: If we would not restrict $u$ to be the only vertex in $P_{birth(u)}$, other vertices in the same ear could have degree two, as the non-separateness does not imply any later neighbors for the last ear. The condition $ru \notin P_{birth(u)}$ ensures that $u$ has degree at least three in $G$ (which is necessary for $3$-connectivity) and will also lead to the existence of a third independent spanning tree (see Application~1 in Section~\ref{sec:applications}).

We note that forcing one edge $rt$ in $P_0$ is optimal in the sense that two edges $rz$ and $rt$ cannot be forced: Let $W$ be a sufficiently large wheel graph with center vertex $r$ and rim vertices $t$ and $z$ such that $t$ and $z$ are not adjacent. Then a canonical ordering with $rt, rz \in P_0$ and avoiding $u$ does not exist, as any inner vertex on the rim-path from $t$ to $z$ not containing $u$ has no larger neighbor with respect to $birth$, and thus violates the non-separateness.

The original definition of canonical orderings by Kant~\cite{Kant1996} states the following additional properties.

\begin{lemma}[further properties]\label{lem:canonical}
For every $0 \leq i \leq m-n$ in a canonical ordering,
\begin{compactenum}
	\item[4.] the outer face $C_i$ of the plane subgraph $G_i \subseteq G$ is a (simple) cycle that contains $rt$,
	\item[5.] $G_i$ is $2$-connected and every separation pair of $G_i$ has both its vertices in $C_i$, and
	\item[6.] for $i > 0$, the neighbors of $inner(P_i)$ in $G_{i-1}$ are contained consecutively in $C_{i-1}$.
\end{compactenum}
Further, the canonical ordering implies the existence of one satisfying the following property:
\begin{compactenum}
	\item[7.] if $|inner(P_i)| \geq 2$, each inner vertex of $P_i$ has degree two in $G-\overline{V_i}$
\end{compactenum}
\end{lemma}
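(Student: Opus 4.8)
The plan is to prove the four properties essentially simultaneously by induction on $i$, since they reinforce one another: the cycle structure in~4 is what makes properties~5,~6, and~7 meaningful, and the non-separating condition is the engine that drives the induction forward. The base case is $i=0$, where $G_0 = P_0$ is the initial cycle containing $tr$; here $C_0 = P_0$ is trivially a simple cycle, $2$-connectedness and the degree-two condition are immediate for a cycle, and properties~5 and~6 hold vacuously or trivially. So the real content is the inductive step, where I assume properties~4--7 hold for $G_{i-1}$ and must establish them for $G_i = G_{i-1} \cup P_i$.

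For the inductive step, I would first understand the geometric picture: since $D$ is non-separating, every inner vertex of $P_{i}$ for $j \le i$ has a neighbor in $\overline{G_{i}}$ (when $\overline{G_i} \ne \emptyset$), and by Lemma~\ref{lem:connected} the complement $\overline{G_i}$ is connected. The key geometric claim is that $\overline{G_i}$, together with the edges connecting it to $G_i$, must lie entirely inside a single face of the plane graph $G_i$ --- and this face is bounded by $C_i$. I would argue this by using the non-separating property together with planarity: since $P_i$ is added as an ear attaching at its two endpoints to $G_{i-1}$, and since by the inductive hypothesis its attachment points lie on $C_{i-1}$, the ear $P_i$ is drawn inside the (unique outer) face bounded by $C_{i-1}$. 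Property~6 for step $i$ --- that the neighbors of $inner(P_i)$ in $G_{i-1}$ appear consecutively on $C_{i-1}$ --- is exactly what guarantees that adding $P_i$ inside this face, together with the requirement that the inner vertices of $P_i$ still reach $\overline{G_i}$, forces a clean split of the old face into the new outer face $C_i$ plus some bounded faces, keeping $C_i$ a simple cycle containing $tr$ (note $tr \in P_0$ stays on the outer boundary throughout because of how the ears nest). This gives property~4, and then property~7 follows because any inner vertex of $P_i$ with $|inner(P_i)| \ge 2$ is newly added with only its two path-neighbors as incident edges in $G_i$.

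For property~5, the $2$-connectedness of $G_i$ follows from the fact that every ear decomposition yields $2$-connected graphs at each stage (this is standard for ear decompositions, since adding a path between two existing vertices preserves $2$-connectedness), and the claim that every separation pair of $G_i$ lies on $C_i$ I would derive from property~4 together with non-separation: if a separation pair had a vertex strictly inside (not on $C_i$), then removing it would disconnect the interior from $\overline{G_i}$, contradicting that every inner vertex reaches $\overline{G_i}$ through the single designated face.

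The hard part will be making the face-consecutivity argument (property~6) rigorous and correctly coupled with property~4, because this is where planarity genuinely enters and where the induction is most delicate: I must show that the attachment points of the \emph{next} ear land consecutively on the current outer cycle, and this depends on the embedding being respected by the ear decomposition in a precise way. The subtlety is that property~6 for index $i$ is really a statement that feeds the inductive step at index $i$, so I expect to need to formulate the induction hypothesis carefully --- perhaps proving~4,~5,~7 at level $i$ and simultaneously deducing~6 at level $i{+}1$ --- to avoid a circular dependency between the consecutivity claim and the face structure it relies on.
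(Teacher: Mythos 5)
Your induction is a workable (if heavier than necessary) route to Properties~4--6, but your treatment of Property~7 contains a genuine gap: you prove the wrong statement. You argue that an inner vertex of $P_i$ is ``newly added with only its two path-neighbors as incident edges in $G_i$.'' That is true under the union-of-ears reading of $G_i = P_0 \cup \cdots \cup P_i$, but under that reading Property~7 is vacuous and there would be nothing to prove. The intended content of Property~7 (inherited from Kant's formulation, where the graph at stage $i$ is the subgraph \emph{induced} by the vertices added so far) is that no inner vertex of a long ear $P_i$ with $|inner(P_i)| \geq 2$ has a neighbor in $G_{i-1}$ at all -- i.e.\ no chord, no short ear that appears later in the decomposition with both endpoints in $V(G_i)$. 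This property does \emph{not} follow from Definition~\ref{def:canonicalordering}: the non-separating condition only forces each inner vertex of $P_i$ to have \emph{some} neighbor in $\overline{G_i}$, and it is perfectly consistent with the definition that such a vertex additionally has a chord into $G_{i-1}$. Accordingly, the paper does not (and cannot) deduce Property~7 from the definition; it instead proves a transformation lemma: every canonical ordering can be turned, in linear time, into one satisfying Property~7, by splitting an offending ear $P_i$ at the chord $xy$ into the two ears $Z_1 \cup xy$ and $Z_2$ and deleting the short ear $P_{birth(xy)}$ (with a similar shortcutting of $P_0$). No argument of the kind you sketch can close this, because the statement you would need is simply false for some canonical orderings.

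Two further points. First, your worry about circularity between Properties~4 and~6 is self-inflicted: the paper obtains Property~4 for every $i$ directly, with no induction, from the facts that each $G_i$ is $2$-connected (it has an ear decomposition), that all faces of a $2$-connected plane graph are cycles (a result of Thomassen), and that $tr$ lies on the outer face of $G$ and hence on the outer face of the subgraph $G_i$; Property~6 then follows, for each $i$ separately, from Property~4 at $i-1$, the fact that $inner(P_i)$ lies outside $C_{i-1}$, and the Jordan Curve Theorem -- essentially your argument, minus the induction scaffolding. Second, your justification of the separation-pair half of Property~5 is incorrect as stated: it is not true that every vertex of $G_i$ off $C_i$ ``reaches $\overline{G_i}$'' (vertices enclosed in bounded faces of $G_i$ typically have no neighbor in $\overline{G_i}$ whatsoever). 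The correct argument, which the paper imports from the canonical-ordering literature, uses $3$-connectivity of $G$: since $\overline{G_i}$ is connected and lies in the outer face of $G_i$, any component of $G_i$ minus a separation pair that avoided $C_i$ would be separated from the rest of $G$ by that pair, a contradiction; then the fact that every component must meet the cycle $C_i$ forces both vertices of the pair onto $C_i$.
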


Properties~4--6 can be easily deduced from Definition~\ref{def:canonicalordering} as follows: Every $G_i$ is a $2$-connected plane subgraph of $G$, as $G_i$ has an ear decomposition. According to~\cite[Corollary~1.3]{Thomassen1981}, all faces of a $2$-connected plane graph form cycles. Thus, every $C_i$ is a cycle and Property~4 follows directly from the fact that $rt$ is assumed to be in the fixed outer face of $G$. Property~5 is implied by the $3$-connectivity of $G$ and Property~4.
Property~6 follows from Property~4, the fact that every inner vertex of $P_i$ must be outside $C_{i-1}$ (in $G$) and the Jordan Curve Theorem.

For the sake of completeness, we show how Property~7 is derived. Although it is not directly implied by Definition~\ref{def:canonicalordering} (in that sense our definition is more general), the following lemma shows that we can always find a canonical ordering satisfying it.

\begin{lemma}\label{TransformingToInduced}
Every canonical ordering can be transformed to a canonical ordering satisfying Property~\ref{lem:canonical}.7 in linear time.
\end{lemma}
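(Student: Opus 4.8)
The plan is to transform any given canonical ordering into one satisfying Property~7 by \emph{splitting} every long ear $P_i$ with $|inner(P_i)| \geq 2$ whose inner vertices do not all have degree two in $G_i$. First I would examine the structure of such a violating ear. By Property~6, the neighbors of $inner(P_i)$ in $G_{i-1}$ occur consecutively along the cycle $C_{i-1}$; call the two endpoints of $P_i$ by $x$ and $y$, so $x,y \in C_{i-1}$. The inner vertices of $P_i$ lie outside $C_{i-1}$, and Property~4 guarantees that after inserting $P_i$ the new outer face $C_i$ is again a cycle. A violation of Property~7 means some inner vertex $v$ of $P_i$ has an incident edge $e$ other than the two ear-edges of $P_i$; since $v$ is an inner vertex born in $P_i$, this edge $e$ is a short ear (an edge) that was not yet placed, or $e$ connects $v$ to another vertex of $G_i$.

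The key idea is that such an offending edge $e$ can be used to \emph{re-parenthesize} the ear decomposition: instead of treating $P_i$ as a single long ear, we split it at $v$ into two shorter ears, reassigning $e$ so that it becomes the connecting edge that closes one of the pieces. Concretely, I would reorder the sequence so that the subpath of $P_i$ from $x$ up to $v$, closed off by $e$, forms one ear, and the remaining subpath forms a later ear. The step I would carry out carefully is verifying that after this local surgery the three defining conditions of Definition~\ref{def:canonicalordering} are preserved: that $tr \in P_0$ is untouched, that $P_{birth(u)}$ remains the last long ear with $u$ as its only inner vertex, and crucially that the decomposition remains non-separating. Non-separation is the delicate invariant, because splitting an ear changes the sets $\overline{G_j}$ for the intermediate indices $j$, and I must check that every inner vertex of each new piece still has a neighbor in the correspondingly shrunk $\overline{G_j}$ unless it is empty.

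After establishing that one split preserves being a canonical ordering, I would apply it repeatedly. Each split strictly reduces the number of inner vertices of degree greater than two summed over all long ears (equivalently, it moves offending chords out of the ``interior'' of ears), so the process terminates; a potential-function argument bounds the total number of splits by $O(m)$, since every edge is reassigned at most a constant number of times. Maintaining the consecutivity from Property~6 and locating the offending edges efficiently is what gives the linear bound: by walking along each ear once and consulting the cyclic order on $C_{i-1}$, each edge is touched a bounded number of times.

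The hard part will be the bookkeeping for non-separation under splitting. Unlike Properties~4--6, which follow from planarity and $3$-connectivity almost for free, Property~3 (non-separating) is a global condition quantified over \emph{all} prefixes $G_j$, and a single split perturbs a whole range of these prefixes at once. The crux is to show that the offending edge $e$, which certifies the violation of Property~7, is exactly the edge that supplies the required neighbor in $\overline{G_j}$ for the newly created ear's inner vertices at the split index; intuitively the edge we use to close off the first piece is the same edge whose presence guarantees connectivity of the complement. Pinning down this correspondence — and confirming that no inner vertex loses its sole neighbor in the complement — is where the real argument lies, and I expect it to rely on the planar embedding together with the Jordan Curve Theorem in the same spirit as the derivation of Property~6.
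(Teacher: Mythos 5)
Your core operation is exactly the paper's: split the violating ear $P_i$ at the offending inner vertex, close one piece with the offending edge (which is a short ear born after $P_i$), insert the two pieces consecutively in place of $P_i$, and delete that short ear. But there are concrete gaps in how you complete it. The most serious is termination. You never specify \emph{which} of the two subpaths gets closed off by the offending edge, and the choice matters: suppose $P_i = (p, w_1, \ldots, w_{k-1}, x, q)$ and the offending edge is $xy$ with $y$ in $G_{i-1}$. If you close the piece $P_i[p,x]$ with $xy$, the remaining piece is the single edge $xq$, i.e.\ a \emph{new} short ear whose endpoint $x$ is an inner vertex of the other new ear --- precisely the configuration you started from. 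Splitting again (closing off the long piece with $xq$) restores $P_i$ together with the short ear $xy$, so the process can oscillate forever; in this scenario your potential (``number of inner vertices of degree greater than two'') does not decrease, and the edges $xy, xq$ are reassigned infinitely often. The paper avoids this by taking $Z_1$ to be a \emph{shortest} subpath from an endpoint of $P_i$ to $x$: then both new ears $Z_1 \cup xy$ and $Z_2$ are long, no new short ear is ever created, and every split permanently consumes one short ear --- that is what yields termination and the linear time bound.

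Second, the non-separation check, which you defer and expect to require planarity and the Jordan Curve Theorem, is in fact elementary and uses no planarity at all. A split leaves every existing complement $\overline{G_j}$ unchanged as a vertex set (deleting the short ear changes no vertex set) and inserts exactly \emph{one} new prefix, whose complement is $\overline{G_i} \cup inner(Z_2) \supseteq \overline{G_i}$; so your worry that ``a single split perturbs a whole range of prefixes'' is unfounded. The only observation needed is that $\overline{G_i} \neq \emptyset$: since $|inner(P_i)| \geq 2$ while the last long ear $P_{birth(u)}$ has $u$ as its only inner vertex (Definition~\ref{def:canonicalordering}.2), we have $u \notin P_i$, hence $u \in \overline{G_i}$. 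Thus every inner vertex of $P_i$ already has a witness neighbor in $\overline{G_i}$ by non-separation of the old ordering, and that witness lies in both new complements; the offending edge itself plays no role here. Finally, you never treat $P_0$: Property~7 constrains $P_0$ as well (all of its vertices are inner vertices), and a chord of $P_0$ cannot be removed by your split, which presupposes via Property~6 that the ear's endpoints lie on $C_{i-1}$, i.e.\ $i \geq 1$. The paper needs a separate operation for this case: replace $P_0$ by a shorter cycle through $r$ and $t$ using the chord, and re-insert the removed subpath as a new ear directly after $P_0$.
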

\begin{proof}
First, consider any ear $P_i \neq P_0$ with $|inner(P_i)| \geq 2$ such that an inner vertex $x$ of $P_i$ has a neighbor $y$ in $G-\overline{V_i}$ that is different from its predecessor and successor in $P_i$. Then $P_{birth(xy)}=xy$ and $birth(xy) > i$. If $y$ is in $P_i$, let $Z$ be the path obtained from $P_i$ by replacing $P_i[x,y] \subseteq P_i$ with $xy$; we call this latter operation \emph{short-cutting}. We replace $P_i$ with the two ears $Z$ and $P_i[x,y]$ in that order and delete $P_{birth(xy)}=xy$. This preserves Properties~1--3 (note that $u \notin P_i$, as $|inner(P_i)| \geq 2$) and therefore the canonical ordering. If $y$ is not in $P_i$, let $Z_1$ be a shortest path in $P_i$ from an endpoint of $P_i$ to $x$ and let $Z_2$ be the path in $P_i$ from $x$ to the remaining endpoint. Replace $P_i$ with the two ears $Z_1 \cup xy$ and $Z_2$ in that order and delete $P_{birth(xy)}$. This preserves Properties~1--3.

Now, consider a vertex $x \in P_0$ having not degree $2$ in $G-\overline{V_0}$, i.e.\ $x$ has a non-consecutive neighbor $y$ in $P_0$ in the graph vertex-induced by $V(P_0)$. If $x \in \{r,t\}$, we replace $P_0$ with the shortest cycle $C$ in $P_0 \cup xy$ that contains $r$, $t$ and $y$, delete $P_{birth(xy)}=xy$ and add the remaining path from $x$ to $y$ in $P_0 -E(C)$ as new ear directly after $C$. This clearly preserves Properties~1--3. If $x \notin \{r,t\}$, we can shortcut $P_0$ in a similar way. The above operations can be computed in linear total time.
\end{proof}

Our definition of canonical orderings uses planarity only in one place: $tr \cup ru$ is assumed to be part of the outer face of $G$. Note that the essential part of this assumption is that $tr \cup ru$ is part of \emph{some} face of $G$, as we can always choose an embedding for $G$ having this face as outer face. Hence, there is a natural generalization of canonical orderings to non-planar graphs $G$: We merely require $rt$ and $ru$ to be edges of $G$! The following ear-based definition is similar to the one given in~\cite{Cheriyan1988} but does not need additional degree-constraints.

\begin{definition}[\cite{Mondshein1971,Cheriyan1988}]\label{def:Mondshein}
Let $G$ be a graph with edges $rt$ and $ru$. A \emph{Mondshein sequence through} $rt$ and \emph{avoiding} $u$ (see Figure~\ref{fig:SequenceCr08Paper}) is an ear decomposition $D$ of $G$ such that
\begin{compactenum}
	\item[1.] $rt \in P_0$,
	\item[2.] $P_{birth(u)}$ is the last long ear, contains $u$ as its only inner vertex and does not contain $ru$, and
	\item[3.] $D$ is non-separating.
\end{compactenum}
\end{definition}

This definition is in fact equivalent to the one Mondshein used 1971 to define a \emph{(2,1)-sequence} \cite[Def.\ 2.2.1]{Mondshein1971}, but which he gave in the notation of a special vertex ordering. This vertex ordering actually refines the partial order $inner(P_0),\ldots,inner(P_{m-n})$ by enforcing an order on the inner vertices of each path according to their occurrence on that path (in any direction). The statement that canonical orderings can be extended to non-planar graphs can also be found in~\cite[p.113]{Fraysseix1994}, however, no further explanation is given.

\begin{figure}[h!tb]
	\centering
	\includegraphics[scale=0.7]{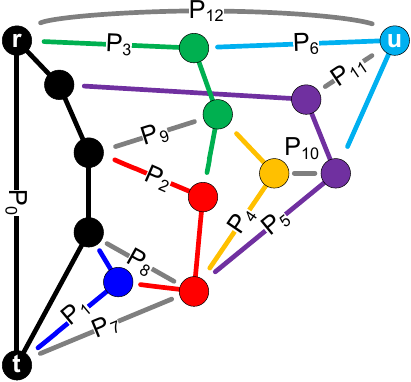}
	\caption{A Mondshein sequence of a non-planar $3$-connected graph.}
	\label{fig:SequenceCr08Paper}
\end{figure}

Note that Definition~\ref{def:Mondshein} implies $u \notin P_0$, as $P_0 \neq P_{birth(u)}$, since $P_{birth(u)}$ contains only one inner vertex. As a direct consequence of this and the fact that $D$ is non-separating, $G$ must have minimum degree at least $3$ in order to have a Mondshein sequence. Mondshein proved that every $3$-connected graph has a Mondshein sequence. In fact, also the converse is true.

\begin{theorem}\cite{Cheriyan1988,Zehavi1989}\label{CheriyanThm}
Let $rt$ and $ru$ be edges of $G$. Then $G$ is $3$-connected if and only if $G$ has a Mondshein sequence through $rt$ and avoiding $u$.
\end{theorem}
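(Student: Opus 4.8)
The plan is to prove both implications separately, treating the forward direction (3-connectivity $\Rightarrow$ existence of a Mondshein sequence) and the converse independently, since they require genuinely different techniques.

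For the easier converse direction, I would assume $G$ has a Mondshein sequence $D$ avoiding $ru$ and argue that $G$ must be $3$-connected. The strategy is to exploit the non-separating property together with the connectedness guaranteed by Lemma~\ref{lem:connected}. I would proceed by contradiction: suppose $\{a,b\}$ is a separation pair (or that $G$ has a cut vertex, handled similarly). Consider the largest index $i$ such that both $a$ and $b$ lie in $G_i$; after removing $a$ and $b$, the remaining inner vertices of ears must still reach the connected ``outside'' graph $\overline{G_i}$ via the non-separating condition, and every $G_i$ is $2$-connected because it carries an ear decomposition. The goal is to show that the non-separating condition forces enough alternative connectivity that no pair $\{a,b\}$ can disconnect $G$. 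I expect this direction to be comparatively routine: it is essentially a structural deduction from Definition~\ref{def:Mondshein} and Observation~\ref{observation}, and the remark following the definition already notes that the minimum degree must be $3$, which is the degree-based shadow of this argument.

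The forward direction is the substantive half, and this is where I expect the main obstacle to lie. The natural approach is induction, ideally building the sequence ear by ear, but a direct induction on $G$ itself is awkward because deleting or contracting edges need not preserve $3$-connectivity. The cleanest route is to invoke the existence of non-separating cycles and paths in $3$-connected graphs: Tutte's theorem (cited as \cite{Tutte1963}) guarantees, for suitable vertex/edge data, a cycle or ear whose removal leaves the rest of the graph connected and whose deletion from an appropriate auxiliary structure preserves the inductive hypothesis. Concretely, I would set up an induction that maintains a partially built non-separating ear decomposition $G_i$ together with the invariant that $\overline{G_i}$ is connected and that every not-yet-interior vertex still has a neighbor outside; at each step one must find a new ear $P_{i+1}$ whose inner vertices all retain neighbors in the shrinking $\overline{G_{i+1}}$, while reserving $u$ and the edge $ru$ to be placed last as required by Condition~2.

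The hard part will be controlling the placement of the distinguished vertex $u$ and edge $ru$ so that $P_{birth(u)}$ becomes the genuinely last long ear with $u$ as its sole inner vertex, and simultaneously guaranteeing that the non-separating condition is never violated as the ``outside'' graph is exhausted. This is exactly the delicate bookkeeping that makes the existence proof nontrivial, and it is the point at which Tutte's non-separating-path machinery must be applied with care to a carefully chosen contracted or reduced graph. Since Theorem~\ref{CheriyanThm} is attributed to \cite{Cheriyan1988,Zehavi1989}, I would expect the authors to cite these works and sketch the reduction rather than reprove Tutte's theorem from scratch; my own proof plan would likewise lean on \cite{Tutte1963} as the key external input and concentrate the original argument on maintaining the Mondshein invariants through the induction.
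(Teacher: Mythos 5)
First, a point of orientation: the paper never proves Theorem~\ref{CheriyanThm} at all --- it is imported from \cite{Cheriyan1988,Zehavi1989}, exactly as you predicted at the end of your proposal, so there is no in-paper proof to compare against. What the paper does supply is an independent, self-contained proof of the forward direction: $K_4$ visibly has a Mondshein sequence, Theorem~\ref{thm:BG} builds any $3$-connected $G$ from $K_4$ by BG-operations, and the Path Replacement Lemma (Lemma~\ref{lem:PathReplacement}, proved as Lemma~\ref{lem:detailed}) transports a Mondshein sequence across each operation --- this is precisely Theorem~\ref{thm:Linear}. Your plan for the forward direction instead follows the classical route of \cite{Cheriyan1988} via Tutte's non-separating-cycle theorem \cite{Tutte1963}. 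That route is known to work, but your write-up stops exactly where the work begins: the inductive step (producing the next ear whose inner vertices keep neighbors in the shrinking $\overline{G_{i+1}}$, while reserving $u$ and $ru$ for the end) is explicitly named as the ``delicate bookkeeping'' and then deferred rather than carried out. As a proof this is a genuine gap --- and an avoidable one, since within this paper the forward direction follows immediately from Theorem~\ref{thm:BG} together with Lemma~\ref{lem:detailed}.

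Second, the converse is not the ``comparatively routine'' deduction you claim, and your sketch has a concrete hole. Definition~\ref{def:nonseparating} guarantees each inner vertex of $P_i$ only \emph{one} neighbor in $\overline{G_i}$, and nothing prevents that single neighbor from being one of the two deleted vertices. Suppose $\{a,b\}$ is a separation pair and let $C$ be a component of $G-a-b$ not containing $u$; let $z$ be a vertex of $C$ of maximum birth. Every neighbor of $z$ lies in $C\cup\{a,b\}$; the one neighbor in $\overline{G_{birth(z)}}$ promised by the definition (nonempty, since $u\in\overline{G_{birth(z)}}$) has, by Observation~\ref{observation}, larger birth than $z$, hence lies outside $C$, hence lies in $\{a,b\}$. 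At this point your argument offers no second escape route, and the $2$-connectivity of the graphs $G_i$ cannot supply one: every $2$-connected graph, $3$-connected or not, has an ear decomposition in which all $G_i$ are $2$-connected, so that observation adds nothing beyond $2$-connectivity. (Also, ``the largest index $i$ such that both $a$ and $b$ lie in $G_i$'' is simply $m-n$; you presumably meant the smallest --- another sign the sketch was not carried through.) The known proofs of this direction do strictly more work: from the sequence one first builds three independent spanning trees rooted at $r$ (as in Application~1, a construction that uses only the Mondshein property, not $3$-connectivity), then converts them into three internally disjoint paths between \emph{arbitrary} vertex pairs (as in Application~2), and concludes $3$-connectedness from Menger's theorem. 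An argument of this strength, or the original one in \cite{Cheriyan1988,Zehavi1989}, is what your converse still needs.
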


We state two additional facts about Mondshein sequences. For the first, let $G$ be planar. Clearly, every canonical ordering of an embedding of $G$ is also a Mondshein sequence. Conversely, let $D$ be a Mondshein sequence of $G$ through $rt$ and avoiding $u$. Then Theorem~\ref{CheriyanThm} implies that $G$ is 3-connected. If $G$ has an embedding in which $tr \cup ru$ is contained in a face, we can choose this face as outer face and get an embedding of $G$ for which $D$ is a canonical ordering. This embedding must be unique, as Whitney proved that any 3-connected planar graph has a unique embedding (up to flipping)~\cite{Whitney1932}. Otherwise, there is no embedding of $G$ such that $tr \cup ru$ is contained in some face. Since the faces of a $3$-connected planar graph are precisely its non-separating cycles~\cite{Tutte1963}, we conclude the following observation.

\begin{observation}\label{obs:MondsheinIsCanonical}
For a planar graph $G$ and edges $tr$ and $ru$, the following statements are equivalent:
\begin{compactitem}
	\item There is a planar embedding of $G$ whose outer face contains $tr \cup ru$, and $D$ is a canonical ordering of this (unique) embedding through $rt$ and avoiding $u$.
	\item $D$ is a Mondshein sequence through $rt$ and avoiding $u$, and $tr \cup ru$ is contained in a non-separating cycle of $G$.
\end{compactitem}
\end{observation}

For the second fact, let a \emph{chord} of an ear $P_i$ be an edge in $G$ that joins two non-adjacent vertices of $P_i$. Note that the definition of a Mondshein sequence allows chords for every $P_i$. Once having a Mondshein sequence, one can aim for a slightly stronger structure. Let a Mondshein sequence be \emph{induced} if $P_0$ is induced in $G$ and every ear $P_i \neq P_0$ has no chord, except possibly the one joining the endpoints of $P_i$. It has been shown~\cite{Cheriyan1988} that every Mondshein sequence can be made induced. The following lemma shows the somewhat stronger statement that we can always expect Mondshein sequences to satisfy Property~\ref{lem:canonical}.7. In fact, its proof is precisely the same as the one for Lemma~\ref{TransformingToInduced}, since none of its arguments uses planarity.

\begin{lemma}
Every Mondshein sequence can be transformed to a Mondshein sequence $D$ satisfying Property~\ref{lem:canonical}.7 in linear time. In particular, $D$ is induced.
\end{lemma}

\section{Computing a Mondshein Sequence}\label{sec:computing}
Mondshein gave an involved algorithm~\cite{Mondshein1971} that computes his sequence in time $O(m^2)$. Independently, Cheriyan and Maheshwari gave an algorithm that runs in time $O(nm)$ and which is based on a theorem of Tutte. At the heart of our linear-time algorithm is the following classical construction sequence for $3$-connected graphs due to Barnette and Grünbaum~\cite{Barnette1969} and Tutte~\cite[Thms.\ 12.64 and~12.65]{Tutte1966}.

\begin{definition}\label{def:bgoperation}
The following operations on simple graphs are \emph{BG-operations} (see Figure~\ref{fig:BGOperations}).
\begin{compactenum}[(a)]
	\item \emph{vertex-vertex-addition}: Add an edge between two distinct non-adjacent vertices
	\item \emph{edge-vertex-addition}: Subdivide an edge $ab$, $a \neq b$, with a vertex $v$ and add the edge $vw$ for a vertex $w \notin \{a,b\}$
	\item \emph{edge-edge-addition}: Subdivide two distinct edges (the edges may intersect in one vertex) with vertices $v$ and $w$, respectively, and add the edge $vw$
\end{compactenum}
\end{definition}

\begin{figure}[h!tb]
	\centering
	\subfloat[vertex-vertex-addition]{
	\makebox[3.7cm]{
		\includegraphics[scale=0.7]{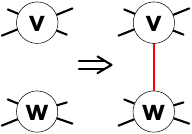}
		\label{fig:BG1}
	}}
	\hfill
	\subfloat[edge-vertex-addition]{
		\includegraphics[scale=0.7]{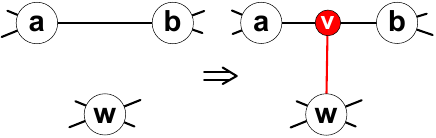}
		\label{fig:BG2}
	}
	\hfill
	\subfloat[edge-edge-addition]{
		\includegraphics[scale=0.7]{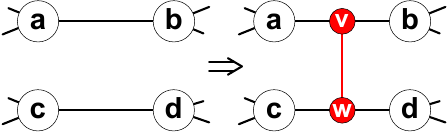}
		\label{fig:BG3}
	}
	\caption{BG-operations}
	\label{fig:BGOperations}
\end{figure}

\begin{theorem}[\cite{Barnette1969,Tutte1966}]\label{thm:BG}
A graph is 3-connected if and only if it can be constructed from $K_4$ using BG-operations.
\end{theorem}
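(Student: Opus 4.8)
The plan is to prove the two directions separately: sufficiency (graphs built from $K_4$ by BG-operations are $3$-connected) by a straightforward induction on the construction, and necessity (every $3$-connected graph arises this way) by an induction that repeatedly undoes a single BG-operation. Throughout I use the standard characterization that a graph is $3$-connected exactly when it has at least four vertices and no separating pair.

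For sufficiency, I would first note that $K_4$ is $3$-connected and then show that each BG-operation, applied to a $3$-connected graph $G$ with $n \geq 4$, yields a $3$-connected graph $G'$. Each operation preserves simplicity (by the non-adjacency and distinctness requirements in Definition~\ref{def:bgoperation}) and never drops the vertex count below four, so it suffices to rule out a separating pair of $G'$. Vertex-vertex-addition merely adds an edge, and adding edges cannot create a separating pair, so this case is immediate. For edge-vertex- and edge-edge-addition I would assume a separating pair $\{p,q\}$ of $G'$ and distinguish cases according to whether the newly created degree-$3$ subdivision vertices lie in $\{p,q\}$; in each case the three available neighbors of each new vertex let one recover a disconnecting pair or disconnecting structure of the original graph $G$, contradicting its $3$-connectivity.

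For necessity, I would use strong induction on the number of edges, with base case $K_4$. Given a $3$-connected $G \neq K_4$, I note that the only $3$-connected graph on four vertices is $K_4$ (every vertex there has degree exactly three), so $n \geq 5$. The goal is to find an \emph{inverse} BG-operation producing a $3$-connected graph $G'$ with fewer edges from which $G$ is recovered: namely, either (reverse of operation (a)) deleting an edge $e$ with $G-e$ still $3$-connected; or (reverse of operation (b)) smoothing a degree-$3$ vertex $v$ with neighbors $a,b,w$ by deleting $v$ and inserting $ab$ when $ab \notin E$; or (reverse of operation (c)) deleting an edge $vw$ whose endpoints both have degree $3$ and smoothing both. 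If some edge $e$ has $G-e$ still $3$-connected, I apply the reverse of (a) and finish by induction. Otherwise $G$ is \emph{minimally $3$-connected}, and I invoke the structure theory of such graphs --- they always contain degree-$3$ vertices --- to locate a vertex (or an edge joining two degree-$3$ vertices) admitting the reverse of (b) or (c).

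The hard part is the necessity direction, specifically the minimally $3$-connected case, where two obstacles must be cleared. First, the smoothing must keep the graph \emph{simple}: smoothing $v$ with neighbors $a,b,w$ fails when the only candidate pair is already an edge, and the degenerate configuration in which $a,b,w$ are pairwise adjacent (so $\{v,a,b,w\}$ induces a $K_4$) must be excluded or treated separately, using $n \geq 5$ together with $3$-connectivity. Second, and more delicately, the smoothing must \emph{preserve $3$-connectivity}; guaranteeing the existence of a degree-$3$ vertex whose smoothing creates no separating pair is exactly the combinatorial heart of the theorem, and is where the results on minimally $3$-connected graphs (in the spirit of Tutte's wheel theorem and the work of Halin and Mader on degree-$3$ vertices) do the essential work.
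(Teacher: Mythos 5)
The paper does not actually prove Theorem~\ref{thm:BG}: it imports it from Barnette--Gr\"unbaum and Tutte \cite{Barnette1969,Tutte1966}, so there is no in-paper argument to compare yours against, and your proposal must stand on its own. Your sufficiency direction does: $K_4$ is $3$-connected, vertex-vertex-addition only adds an edge, and for the two subdivision operations the case analysis on whether the new degree-$3$ vertices lie in a hypothetical separating pair is routine and goes through.

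The necessity direction, however, has a genuine gap. After reducing to the minimally $3$-connected case you ``invoke the structure theory of such graphs --- they always contain degree-$3$ vertices --- to locate a vertex (or an edge joining two degree-$3$ vertices) admitting the reverse of (b) or (c).'' The existence of degree-$3$ vertices (Halin, Mader) does \emph{not} produce such a vertex or edge; locating one is the entire content of the theorem, not a corollary of degree counts. Concretely, take the triangular prism: it is minimally $3$-connected and all six vertices have degree $3$, yet the reverse of operation (b) fails at \emph{every} vertex $v$ --- joining the two triangle-neighbors of $v$ creates a parallel edge, while joining a triangle-neighbor to the matching-neighbor leaves a vertex of degree $2$ --- and the reverse of operation (c) fails on every matching edge, again by creating a parallel edge. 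The only valid inverse operation in the prism is the reverse of (c) applied to a \emph{triangle} edge (which yields $K_4$). So what you need is a global existence claim: every minimally $3$-connected graph other than $K_4$ admits \emph{some} edge or vertex at which one of the three inverse operations preserves both simplicity and $3$-connectivity, and the choice of which pair of neighbors to join matters. That claim requires its own nontrivial argument --- it is exactly what \cite{Barnette1969,Tutte1966} prove --- and neither the Halin--Mader results on degree-$3$ vertices nor Tutte's wheel theorem (which generates $3$-connected graphs by a different pair of operations, edge addition and vertex splitting, starting from wheels) substitutes for it; appealing to the Barnette--Gr\"unbaum theorem itself would be circular. As written, your outline defers the combinatorial heart of the proof to literature that does not contain it in the form you need.
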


Hence, applying a BG-operation on a $3$-connected graph preserves it to be simple and $3$-connected. Let a \emph{BG-sequence} of a $3$-connected graph $G$ be a sequence of BG-operations that constructs $G$ from $K_4$. It has been shown that such a BG-sequence can be computed efficiently.

\begin{theorem}[{\cite[Thms. 6.(2) and 52]{Schmidt2013}}]\label{thm:schmidt}
A BG-sequence of a $3$-connected graph can be computed in time $O(m)$.
\end{theorem}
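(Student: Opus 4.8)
The plan is to compute the sequence in reverse: starting from $G$, we repeatedly apply a \emph{reverse BG-operation} that strictly shrinks the current graph while preserving simplicity and $3$-connectivity, until we reach $K_4$; reversing the recorded list of operations then yields a BG-sequence. Each reverse operation is the inverse of one of the three cases in Definition~\ref{def:bgoperation}: undoing a vertex-vertex-addition deletes an edge, undoing an edge-vertex-addition deletes a degree-$3$ vertex and reconnects two of its former neighbors, and undoing an edge-edge-addition deletes an edge joining two degree-$3$ vertices and suppresses both. In each case the candidate is \emph{admissible} only if the resulting graph is simple and $3$-connected.

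Correctness of the reverse strategy follows directly from Theorem~\ref{thm:BG}: since every $3$-connected graph other than $K_4$ arises from a smaller $3$-connected graph by a single BG-operation, as long as the current graph differs from $K_4$ at least one admissible reverse operation exists, so the reduction never gets stuck. This already yields a polynomial algorithm: test the $O(m)$ candidate operations, each with a from-scratch $3$-connectivity check, and apply any admissible one. Since $\Theta(n)$ operations are needed and each global check costs $\Theta(m)$, this naive scheme runs in $\Theta(nm)$ time.

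The entire difficulty is therefore to locate and verify an admissible reverse operation in amortized constant time, so that the whole reduction runs in $O(m)$. First I would fix a DFS-based structure of $G$ and maintain the current graph in an incidence representation supporting edge deletion, degree-$3$ suppression, and neighbor enumeration in amortized $O(1)$. The crux is to avoid recomputing $3$-connectivity after each step: instead of a black-box test, one maintains an explicit structural certificate (essentially the reverse BG-sequence built so far together with the separation-pair structure of the current graph) and argues that a single reverse operation can destroy $3$-connectivity only in tightly constrained, locally detectable ways. One then keeps a frontier of locally reducible candidates and shows, by a charging argument, that each applied operation creates only $O(1)$ new candidates and that each failed admissibility test either permanently removes a candidate or advances a monotone potential, bounding the total number of examined candidates by $O(m)$.

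The main obstacle, as indicated above, is precisely this incremental certification of $3$-connectivity: proving that the admissibility of a reverse operation can be decided without a global recomputation requires a careful case analysis of how a new separation pair can be introduced by an edge deletion or a degree-$3$ reduction, together with a matching amortization showing that the cumulative repair work stays linear. This is the technical heart of~\cite{Schmidt2013}, and it is what upgrades the straightforward $\Theta(nm)$ reverse-reduction scheme to the claimed $O(m)$ bound.
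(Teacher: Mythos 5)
Theorem~\ref{thm:schmidt} is not proven in this paper at all: it is imported by citation from \cite{Schmidt2013}, so there is no internal proof to compare against, and the only meaningful benchmark is the cited work itself. Judged on its own terms, your proposal has a genuine gap: its entire technical content is deferred to \cite{Schmidt2013}, which is exactly the result to be proven, so the argument is circular where it matters. The outer frame is fine --- Theorem~\ref{thm:BG} does guarantee that every simple $3$-connected graph other than $K_4$ admits at least one admissible reverse BG-operation, so the reverse reduction never gets stuck --- but everything that turns a polynomial scheme into an $O(m)$ one (the ``structural certificate'', the claim that a reverse operation can destroy $3$-connectivity only in ``locally detectable'' ways, the charging argument, the ``monotone potential'') is asserted rather than constructed: no certificate is defined, no potential function is exhibited, and no case analysis is given. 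This is not a routine omission; maintaining separation-pair information under edge deletions and degree-$3$ suppressions is a decremental $3$-connectivity problem, and no amortized-constant-time method for it is known, so the step you wave at cannot simply be posited.

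Your sketch also misdescribes how \cite{Schmidt2013} actually achieves linear time. That algorithm is not a bottom-up reduction with incremental recertification; it works with a DFS-based chain decomposition of $G$, classifies and orders the chains, and proves that adding them one at a time (viewed in the subdivided graph) realizes a sequence of BG-operations, the output sequence itself serving as the certificate of $3$-connectivity. The hard work is an ordering theorem about chains, not dynamic maintenance of separation pairs. Two smaller inaccuracies: a BG-sequence from $K_4$ to $G$ has exactly $m-n-2$ operations (each operation increases $m-n$ by one, and $K_4$ has $m-n=2$), which is $\Theta(m)$ for dense graphs rather than $\Theta(n)$; and since a single step of your ``naive'' scheme may have to test up to $O(m)$ candidates, each at cost $O(m)$, that scheme is not $\Theta(nm)$ either.
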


The outline of our algorithm is as follows. Assume we want a Mondshein sequence of $G$ through $r\overline{t}$ and avoiding $\overline{u}$. We will first compute a suitable BG-sequence of $G$ using Theorem~\ref{thm:schmidt} and start with a Mondshein sequence of its first graph, the $K_4$. The crucial part is then a careful analysis that a Mondshein sequence of a $3$-connected graph can be modified to one of $G'$, where $G'$ is obtained from the former by applying a BG-operation.

In more detail, we need a special BG-sequence to harness the dynamics of the vertices $r$, $\overline{t}$ and $\overline{u}$ throughout the BG-sequence. A BG-sequence is determined by an (arbitrary) DFS-tree and two fixed incident edges of its root. We choose a DFS-tree with root $r$ and fix the edges $r\overline{t}$ and $r\overline{u}$. This way the initial $K_4$ will contain the vertex $r$ and $r$ will never be relabeled~\cite[Section~5]{Schmidt2010}.

However, $\overline{t}$ and $\overline{u}$ are not necessarily vertices of the $K_4$. This is a problem, as we have to specify an edge $rt$ and vertex $u$ of $K_4$ which the Mondshein sequence of $K_4$ goes through and avoids, respectively, for induction purposes. Fortunately, the relation between the graphs in a BG-sequence and subdivisions of these graphs in $G$~\cite[Section~4]{Schmidt2010} gives us such replacement vertices for $\overline{t}$ and $\overline{u}$ efficiently: We find vertices $t$ and $u$ of the initial $K_4$ such that the following labeling process ends with the input graph $G$ in which $t = \overline{t}$ and $u = \overline{u}$: For every BG-operation of the BG-sequence from $K_4$ to $G$ that subdivides the edge $rt$ or $ru$, we label the subdividing vertex with $t$ or $u$, respectively (the old vertex $t$ or $u$ is then given a different label). As desired, the final $t$ and $u$ upon completion of the BG-sequence will be $\overline{t}$ and $\overline{u}$. We refer to~\cite[Section~4]{Schmidt2010} for details on how to efficiently compute such a labeling scheme.

For the $K_4$, it is easy to compute a Mondshein sequence through $rt$ and avoiding $u$ efficiently. We iteratively proceed to a Mondshein sequence of the next graph in the sequence. The following modifications and their computational analysis are the main technical contribution of this paper and depend on the various positions in the sequence in which the vertices and edges that are involved in the BG-operation can occur.

Note that any short ear $xy$ in a Mondshein sequence can be moved to an arbitrary position of the sequence without destroying the Mondshein property, as long as both $x$ and $y$ are created at an earlier position. Thus, the essential information of a Mondshein sequence is its order on long ears. We will prove that there is always a modification that is local in the sense that the only long ears that are modified are the ones containing a vertex that is involved in the BG-operation.

\begin{lemma}[Path Replacement Lemma]\label{lem:PathReplacement}
Let $G$ be a $3$-connected graph with edges $rt$ and $ru$ and let $D=(P_0,P_1,\ldots,P_{m-n})$ be a Mondshein sequence of $G$ through $rt$ and avoiding $u$. Let $G'$ be obtained from $G$ by applying a BG-operation $\Gamma$ and let $rt'$ and $ru'$ be the edges of $G'$ that correspond to $rt$ and $ru$ in $G$. Then a Mondshein sequence $D'$ of $G'$ through $rt'$ and avoiding $u'$ can be computed from $D$ using only constantly many (amortized) constant-time modifications.
\end{lemma}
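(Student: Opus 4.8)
The plan is to prove the lemma by a case analysis on the type of the BG-operation $\Gamma$ (vertex-vertex-, edge-vertex-, or edge-edge-addition) and, within each type, on the relative birth indices in $D$ of the vertices and edges that $\Gamma$ touches. In every case I would exhibit an explicit modification of $D$ that rewrites only a constant number of ears — precisely the long ears through the involved vertices, as the statement promises — and then re-verify the three conditions of Definition~\ref{def:Mondshein}. Throughout, the guiding principle is monotonicity: the modifications I use only ever postpone the birth of an existing vertex or introduce a genuinely new vertex, so each set $\overline{V_i}$ can only grow and each induced graph $\overline{G_i}$ can only gain vertices and edges. Consequently, every inner vertex of an \emph{unmodified} ear keeps the neighbor in $\overline{G_i}$ that witnessed Condition~3 before, and it suffices to re-examine Condition~3 only for the handful of ears I actually rewrite.

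The vertex-vertex-addition is immediate: the new edge $xy$ is inserted as a short ear at any position after both $x$ and $y$ are born, e.g.\ directly after the last long ear $P_{birth(u)}$. It carries no inner vertices, so Condition~3 is vacuous for it, Conditions~1 and~2 are untouched, and $ru' = ru$ still satisfies the avoidance requirement. The two subdivision operations are the substance of the proof, and for these I would isolate two elementary moves. The first is \emph{subdivide-in-place}: when the new vertex $v$ obtained from subdividing $ab$ has some neighbor born strictly after $birth(ab)$, I insert $v$ into $P_{birth(ab)}$ and append the new edge as a short ear; that later-born neighbor then lies in $\overline{G_{birth(v)}}$ by Observation~\ref{observation}, so $v$ satisfies Condition~3 automatically. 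In edge-edge-addition this applies verbatim to the endpoint subdivided on the earlier of the two ears, since its partner is then born later. The second move is \emph{split-and-reroute}, used when all of $v$'s neighbors are born at or before $birth(ab)$: here I split a long ear through one such neighbor at that neighbor and reconnect the two pieces through the freshly added edge, thereby pushing a neighbor of $v$ into a strictly later ear so that it sits in $\overline{G}$ at the moment $v$ is born. Because the split ear is itself one of the ears through an involved vertex, this stays within the permitted scope of modifications.

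After fixing the move in each case, I would discharge three verification obligations. First, ear-decomposition validity: every rewritten path must meet the already-constructed part in exactly its two endpoints, which holds because the endpoints I choose lie in $G_{birth(ab)-1}$ while the inner vertices are either new or were already born at index $birth(ab)$. Second, Condition~3 for the rewritten ears, computed against the enlarged $\overline{G_i}$. Third, Conditions~1 and~2 together with avoidance of $ru'$, which need individual attention precisely when $\Gamma$ is incident to $P_0$, to the last long ear $P_{birth(u)}$, or to one of $r,u$; in those situations I would route the split so that $P_{birth(u)}$ remains the last long ear containing $u$ as its only inner vertex and so that $ru'$ stays a trailing short ear.

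The main obstacle I anticipate is the correctness of \emph{split-and-reroute} across its positional subcases — in particular when the subdivided edge is itself a short ear (so there is no surrounding long ear to donate the postponed neighbor) and the new edge's other endpoint is also born early. Then the postponed neighbor must instead be supplied by rerouting the birth-ear of one of the involved endpoints, and one must check simultaneously that this rerouting (i) leaves that ear's remaining inner vertices with neighbors in $\overline{G}$, (ii) never moves any vertex to an earlier birth (so that monotonicity, and hence the untouched ears, survive), and (iii) keeps the required reconnection endpoint available at the rerouted position. Verifying Condition~3 for every inner vertex of every touched ear, uniformly over which endpoints are born where, is the genuinely delicate bookkeeping that the full proof must carry out in detail.
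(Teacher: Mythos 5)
Your proposal takes essentially the same route as the paper: the paper's proof (Lemma~\ref{lem:detailed}) is exactly such a case analysis over the type of $\Gamma$ and the birth indices of the involved vertices and edges, with modifications confined to the long ears through those vertices (scheme M1--M4), your two elementary moves corresponding to the paper's in-place subdivision (M2, Cases~(2ai)/(2bi)) and its splitting of $P_{birth(b)}$ into rerouted pieces (the remaining cases), and your monotonicity principle corresponding to the paper's invariant (*) together with $\overline{V_{f(i)}} \subseteq \overline{V'_i}$ (Lemma~\ref{lem:LongEars}), which likewise reduces the non-separation check to the rewritten ears only (Table~\ref{table}). The delicate bookkeeping you defer --- including your anticipated worst case of a short subdivided ear whose partner vertex is born early, which is precisely the paper's Cases~(2bii), (3bi) and~(3bii), and the exceptional situation $b=u$ where $\overline{G_{birth(b)}}=\emptyset$ forces new witnesses for the first split piece --- is where the bulk of the paper's proof (Lemmas~\ref{lem:proof1}--\ref{lem:proof3}) actually lives, so the two arguments agree both in strategy and in where the difficulty sits.
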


We split the proof into three parts. First, we state two preprocessing routines $leg()$ and $belly()$ on $D$ that will reduce the number of subsequent cases considerably. Second, we show how to modify $D$ to $D'$ using these routines and, third, we discuss computational issues.

From now on, let $vw$ be the edge that was added by $\Gamma$ such that $v$ subdivides $ab \in E(G)$ and $w$ subdivides $cd \in E(G)$ (if applicable). Thus, the vertex $u'$ in $G'$ is either $u$, $v$ or $w$, and likewise $t'$ in $G'$ is either $t$, $v$ or $w$.
By symmetry, we assume w.l.o.g.\ that $birth(a) \leq birth(b)$, $birth(c) \leq birth(d)$ and $birth(d) \leq birth(b)$. Recall that $\{a,b\}$ may intersect $\{c,d\}$ in at most one vertex. If not stated otherwise, the $birth$-operator refers always to $D$ in this section.

We need some notation for describing the modifications. Suppose $P_i$ is an ear containing an inner vertex $z$. If an orientation of $P_i$ is given, let $P_i[,z]$ be the prefix of $P_i$ ending at $z$ in this orientation and let $P_i[z,]$ be the suffix of $P_i$ starting at $z$. Occasionally, the orientation does not matter; if none is given, an arbitrary orientation can be taken. For paths $A$ and $B$ that end and start at a unique common vertex, let $A+B$ be the \emph{concatenation} of $A$ and $B$. Similarly, for disjoint paths $A$ and $B$ such that exactly one endpoint $x$ of $A$ is a neighbor of exactly one endpoint $y$ of $B$, let $A+B$ be the path $A \cup xy \cup B$.

\paragraph{Of legs and bellies:}
We describe two preprocessing routines. These will be used on $D$ in the next section to ensure that $ab \in P_{birth(b)}$ and $cd \in P_{birth(d)}$ (up to some special cases). Let an edge $xy \notin P_{birth(y)}$ be a \emph{leg} of $P_{birth(y)}$ if $xy \neq ru$ and $birth(x) < birth(y)$. For each such leg, $P_{birth(y)}$ is a long ear, $xy$ is a short ear, and $x$ is either not contained in $P_{birth(y)}$ or an endpoint of $P_{birth(y)}$ (see Figures~\ref{fig:Leg} and~\ref{fig:Leg2}). In the first case, if $y$ is not the only inner vertex of $P_{birth(y)}$, orient $P_{birth(y)}$ such that the successor of $y$ is also an inner vertex of $P_{birth(y)}$; this will preserve the non-separateness at $y$ for some later cases. In the latter case, orient $P_{birth(y)}$ toward $x$.

\begin{figure}[htb]
	\centering
	\subfloat[A leg $xy$ with $x \notin P_{birth(y)}$ and the result of Operation $leg(x,y)$ (dashed lines).]{
	\makebox[5cm]{
		\includegraphics[scale=0.8]{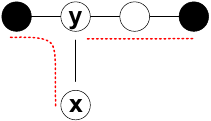}
		\label{fig:Leg}
	}
	}
	\hspace{1cm}
	\subfloat[A leg $xy$ with $x \in P_{birth(y)}$ and the result of Operation $leg(x,y)$.]{
	\makebox[5cm]{
		\includegraphics[scale=0.8]{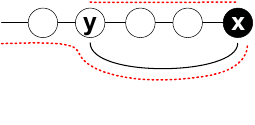}
		\label{fig:Leg2}
	}
	}
	\\
	\subfloat[A belly $xy$ with $birth(y) > 0$ and the result of Operation $belly(x,y)$.]{
	\makebox[5cm]{
		\includegraphics[scale=0.8]{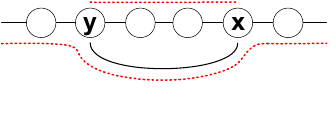}
		\label{fig:Belly}
	}
	}
	\hspace{1cm}
	\subfloat[A belly $xy$ with $birth(y) = 0$  and the result of Operation $belly(x,y)$.]{
	\makebox[5cm]{
		\includegraphics[scale=0.8]{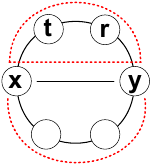}
		\label{fig:Belly2}
	}
	}
	\caption{}
\end{figure}

A leg $xy$ of $P_{birth(y)}$ has the feature that it may be incorporated into $P_{birth(y)}$ such that the resulting sequence is still a Mondshein sequence: Let $leg(x,y)$ be the operation that deletes the short ear $xy$ in the sequence $D$ and replaces the long ear $P_{birth(y)}$ by the two ears $P_{birth(y)}[,y]+x$ and $P_{birth(y)}[y,]$ in that order. We prove that the resulting sequence $\overline{D}$ is a Mondshein sequence. Clearly, $\overline{D}$ is an ear decomposition. In addition, we still have $rt \in P_0$, as $P_0$ did not change due to $birth(y) > birth(x) \geq 0$. Since every inner vertex of the two new ears is also an inner vertex of $P_{birth(y)}$, it has a neighbor in some larger ear (with respect to $birth$) in $\overline{D}$; thus $\overline{D}$ is non-separating by Definition~\ref{def:nonseparating}. Since $xy \neq ru$, the last long ear in $\overline{D}$ does not contain $ru$. The last long ear in $\overline{D}$ may be different from the one in $D$ if $y = u$, but since the replacement does not introduce any new inner vertex, it will still contain the same vertex $u$ as only inner vertex. Hence, $\overline{D}$ is a Mondshein sequence through $rt$ and avoiding $u$ by Definition~\ref{def:Mondshein}.

Let an edge $xy$ of $G$ be a \emph{belly} of $P_{birth(y)}$ if $birth(x) = birth(y) \neq birth(xy)$. Then $P_{birth(y)}$ contains both $x$ and $y$ as inner vertices, but does not contain $xy$; hence $xy$ is a short ear (see Figures~\ref{fig:Belly} and~\ref{fig:Belly2}).

For a belly $xy$, we can again find a Mondshein sequence that ensures $xy \in P_{birth(y)}$. First, consider the case $birth(y) > 0$, in which we orient $P_{birth(y)}$ from $y$ to $x$. For this case, let $belly(x,y)$ be the operation that deletes the short ear $xy$ in the sequence $D$ and replaces the long ear $P_{birth(y)}$ by the two long ears $P_{birth(y)}[,y]+P_{birth(y)}[x,]$ and $P_{birth(y)}[y,x]$ in that order (see Figure~\ref{fig:Belly}). For the same reasons as before, the resulting sequence $\overline{D}$ is an ear decomposition and non-separating. Since $P_{birth(y)}$ contains two inner vertices, we have $birth(y) \neq birth(u)$, and it follows that the last long ear in $\overline{D}$ is exactly the last long ear of $D$. In addition, $rt \in P_0$, as $P_0$ did not change due to $birth(y) > birth(x) \geq 0$. Hence, $\overline{D}$ is a Mondshein sequence through $rt$ and avoiding $u$.

Now consider the case $birth(y) = 0$. The vertices $x$ and $y$ cut $P_0$ into two distinct paths $A$ and $B$ having endpoints $x$ and $y$; let $A$ be the one containing $rt$. Let $belly(x,y)$ be the operation that deletes the short ear $xy$ in $D$ and replaces $P_0$ by the two long ears $A \cup xy$ and $B$ in that order (see Figure~\ref{fig:Belly2}). This preserves $P_0$ to be a cycle that contains $rt$ and, thus, gives also a Mondshein sequence through $rt$ and avoiding $u$. Note that both operations $leg()$ and $belly()$ leave the vertices $u$, $r$ and $t$ unchanged.

\paragraph{Modifying $D$ to $D'$:}
We use the operations $leg()$ and $belly()$ for a preprocessing on the subdivided edges $ab$ and $cd$ (if applicable) by $\Gamma$. Suppose first that $ru \notin \{ab,cd\}$; we will solve the remaining case $ru \in \{ab,cd\}$ later. Assume $birth(ab) \neq birth(b)$ and recall that $birth(a) \leq birth(b)$. If $birth(a) < birth(b)$, $ab$ is a leg of $P_{birth(b)}$ and we apply the operation $leg(a,b)$. Otherwise, $birth(a) = birth(b)$ and we apply the operation $belly(a,b)$. In both cases, this leaves a Mondshein sequence in which $birth(ab) = birth(b)$, i.e.\ $ab$ is contained in the long chain $P_{birth(b)}$.

Similarly, if $birth(cd) \neq birth(d)$, we want to apply either $leg(c,d)$ or $belly(c,d)$ to obtain $birth(cd) = birth(d)$. However, doing this without any restrictions may result in loosing $birth(ab) = birth(b)$, e.g.\ when $cd$ is a belly of $P_{birth(b)}$. Thus, we apply $leg(c,d)$ or $belly(c,d)$ only if $birth(d) < birth(b)$, as then $d$ is no inner vertex of $P_{birth(b)}$. Since $birth(d) \leq birth(b)$, we have therefore $birth(d) \in \{birth(b),birth(cd)\}$. Subdivide the edge $ab$ in $G$ and $P_{birth(ab)}$ with $v$ and likewise subdivide $cd$ with $w$ if applicable for $\Gamma$. Call the resulting sequence $D$; $D$ satisfies $birth(v) = birth(b)$ and $birth(d) \in \{birth(b),birth(w)\}$. We obtain the desired Mondshein sequence $D'$ through $rt'$ and avoiding $u$ from $D$ by distinguishing the following cases (see Figure~\ref{fig:PathReplacement}).

\bigskip
\begin{compactenum}[(1)]
	\item \emph{$\Gamma$ is a vertex-vertex-addition}\\
		Obtain $D'$ from $D$ by adding the new short ear $vw$ to the end of $D$. This way $v$ and $w$ exist when $vw$ is born.
	
	\item \emph{$\Gamma$ is an edge-vertex-addition} \hfill $\triangleright$ $birth(v) = birth(b)$
	\begin{compactenum}[(a)]
		\item $birth(w) > birth(b)$ \hfill $\triangleright$ $w \notin G_{birth(b)}$\\
			Obtain $D'$ from $D$ by adding the new ear $vw$ to the end of $D$. Since $birth(w) > birth(b)$, $v$ has a larger neighbor with respect to $birth$.
		\item $birth(w) < birth(b)$\\
			Then $wv \neq ru'$, as otherwise we would have $w=r$ and $v=u'$ and thus $ab = ru$, which contradicts our assumption.
			Hence, $wv$ is a leg of $P_{birth(v)}$. We apply $leg(w,v)$. By the orientation assigned to $P_{birth(v)}$, this ensures that $v$ has a larger neighbor with respect to $birth$ (e.g., $b$).
		\item $birth(w) = birth(b)$\\
			Then $wv \notin P_{birth(v)}$, since $v$ is adjacent to only $a$ and $b$ in $P_{birth(v)}$ and $w \notin \{a,b\}$ for edge-vertex-additions. Thus, $birth(w) = birth(v) \neq birth(wv)$ and hence $wv$ is a belly of $P_{birth(v)}$. We apply $belly(w,v)$. By the orientation assigned to $P_{birth(v)}$, this ensures that $v$ has a larger neighbor.
	\end{compactenum}

	\item \emph{$\Gamma$ is an edge-edge-addition} \hfill $\triangleright$ $birth(v) = birth(b)$ and $birth(d) \in \{birth(b),birth(w)\}$
	\begin{compactenum}[(a)]
		\item $birth(d) < birth(b)$ \hfill $\triangleright$ $d \in G_{birth(b)-1}$ and $birth(b) > 0$\\
			Then $birth(c) \leq birth(d) = birth(w) < birth(b) = birth(v)$. We further have $vw \neq ru'$, as otherwise we would have $w=r$ and $v=u'$ and thus $r \in \{a,b\}$ which contradicts $r=w$. Hence, $wv$ is a leg of $P_{birth(b)}$. Obtain $D'$ from $D$ by applying $leg(w,v)$.
		\item $birth(d) = birth(b) = birth(w)$ \hfill $\triangleright$ $d,w \in inner(P_{birth(b)})$\\
			Then $vw$ is a belly of $P_{birth(b)}$. Obtain $D'$ from $D$ by applying $belly(v,w)$.
		\item $birth(d) = birth(b) \neq birth(w)$ and $birth(c) = birth(b)$ \hfill $\triangleright$ $c,d \in inner(P_{birth(b)}) \not \ni w$\\
			Then $birth(w) > birth(b)$ and thus $P_{birth(w)} = cw \cup wd$. Let $Z$ be a shortest path in $P_{birth(b)}$ that contains $c$, $d$ and $v$, but not the edge $rt'$ (the latter is only relevant for $birth(b)=0$). Let $z$ be the inner vertex of $Z$ that is contained in $\{c,d,v\}$. At least one of the two paths $Z[;z]$ and $Z[z;]$, say $Z[z;]$, contains an inner vertex, as otherwise $\Gamma$ would not be a BG-operation. Obtain $D'$ from $D$ by deleting $P_{birth(w)}$, replacing the path $Z$ in $P_{birth(b)}$ with the two edges connecting $w$ to the endpoints of $Z$, and adding the two new ears $Z[;z]+w$ and $Z[z;]$ directly afterward in that order. Clearly, $rt' \in P_0$ in $D'$.
		\item $birth(d) = birth(b) \neq birth(w)$ and $birth(c) \neq birth(b)$ \hfill $\triangleright$ $d \in inner(P_{birth(b)}) \not \ni c,w$\\
			Then $birth(c) < birth(d) < birth(w)$ and hence $birth(b) > 0$ and $P_{birth(w)} = cw \cup wd$. One of the paths $P_{birth(b)}[;v]$ and $P_{birth(b)}[v;]$, say $P_{birth(b)}[v;]$, contains $d$ as an inner vertex.
			Obtain $D'$ from $D$ by replacing $P_{birth(b)}$ with the two ears $P_{birth(b)}[;v]+w+c$ and $P_{birth(b)}[v;]$ in that order and replacing $P_{birth(w)}$ with the short ear $wd$. If $birth(b) \neq birth(u)$, it follows directly that $u' = u$ and thus that $D'$ avoids $u' = u$. Otherwise $birth(b)=birth(u)$, which implies $u=b=d$ and $c \neq r$, since we assumed $cd \neq ru$. Thus, in this case $D'$ avoids $u' = u = b$ as well.
	\end{compactenum}
\end{compactenum}

\begin{figure}[htb]
	\captionsetup[subfigure]{labelformat=empty}
	\centering
	\subfloat[Case~(1)]{
		\includegraphics[scale=1.0]{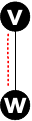}
		\label{fig:Case1}
	}
	\hspace{1.0cm}
	\subfloat[Case~(2a)]{
		\includegraphics[scale=1.0]{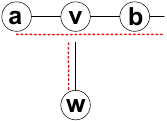}
		\label{fig:Case2ai}
	}
	\hspace{0.5cm}
	\subfloat[Case~(2b)]{
		\includegraphics[scale=1.0]{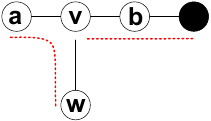}
		\label{fig:Case2aii}
	}
	\hspace{0.5cm}
	\subfloat[Case~(2c)]{
			\includegraphics[scale=1.0]{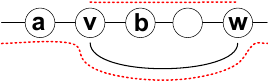}
			\label{fig:Case2aiii}
	}
	\\
	\subfloat[Case~(3a)]{
		\includegraphics[scale=1.0]{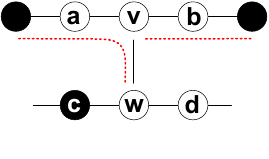}
		\label{fig:Case3aiFirstCase}
	}
	\hspace{0.5cm}
	\subfloat[Case~(3b)]{
		\includegraphics[scale=1.0]{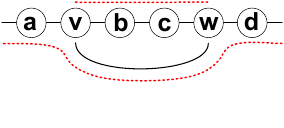}
		\label{fig:Case3aiiThirdCase}
	}
	\\
	\subfloat[Case~(3c)]{
		\includegraphics[scale=1.0]{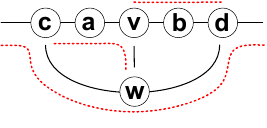}
		\label{fig:Case3aiiSecondCase}
	}
	\hspace{0.5cm}
	\subfloat[Case~(3d)]{
		\includegraphics[scale=1.0]{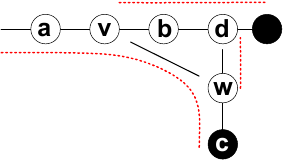}
		\label{fig:Case3aiiFirstCase}
	}
	\caption{Cases when modifying $D$ to $D'$. Black vertices are endpoints of ears that are contained in $G_{birth(b)}$. The dashed paths depict (parts of) the ears in $D'$.}
	\label{fig:PathReplacement}
\end{figure}

In all these cases, we obtain a Mondshein sequence $D'$ through $rt'$ and avoiding $u'$ as desired. Now consider the remaining case $ru \in \{ab,cd\}$. If $birth(d)=birth(b)$ (for an edge-edge-addition), we have $b=d=u$ and can w.l.o.g.\ assume $ru=ab$. Otherwise, $birth(d) < birth(b)$ and it follows directly that we have in all cases, even for edge-vertex-additions, $r=a$ and $u = b$. If $cd$ is a short ear, we move $cd$ to the position in $D$ directly after $P_{birth(d)}$; this preserves a Mondshein sequence. As before, subdivide $ab$ and $cd$ with $v$ and $w$.

Let $\Gamma$ be an edge-vertex-addition. Then $u' = v$ and hence $birth(w) < birth(u) < birth(v)$. Obtain $D'$ from $D$ by replacing $P_{birth(v)}$ with the long ear $uv \cup vw$ and adding the short ear $av = ru'$ directly afterward. Then $D'$ avoids $u'$.

Let $\Gamma$ be an edge-edge-addition and suppose first that $birth(w) \neq birth(u)$. Then $u'=v$ and $birth(w) < birth(v) > birth(u)$. Obtain $D'$ from $D$ by replacing $P_{birth(v)}$ with the long ear $uv \cup vw$ and adding the short ear $av = ru'$ directly afterward. Then $D'$ avoids $u'$. Now suppose that $birth(w) = birth(u)$. Then $b=d=u$, $u'=v$ and $birth(u) = birth(w) < birth(v)$. Obtain $D'$ from $D$ by replacing $P_{birth(v)}$ with the long ear $uv \cup vw$ and adding the short ear $av = ru'$ directly afterward. Hence, in all cases, we obtain a Mondshein sequence $D'$ through $rt'$ and avoiding $u'$.

\paragraph{Computational Complexity:}
For proving the Path Replacement Lemma~\ref{lem:PathReplacement}, it remains to show that each modification can be computed in amortized constant time. Note that ears may become arbitrarily long in the path replacement process and therefore may contain up to $\Theta(n)$ vertices. Moreover, we have to maintain the birth-values of all vertices that are involved in future BG-operations in order to compute which of the subcases in Case~(1)--(3) applies. Thus, we cannot use the standard approach of storing the ears of $D$ explicitly by using doubly-linked lists, as then the birth-values of linearly many vertices may change for every modification.

Instead, we will represent the ears as the sets of a data structure for \emph{set splitting}, which maintains disjoint sets online under an intermixed sequence of find and split operations. Gabow and Tarjan~\cite{Gabow1985} discovered the first data structure for set splitting with linear space and constant amortized time per operation. Their and our model of computation is the standard unit-cost word-RAM. Imai and Asano~\cite{Imai1987} enhanced this data structure to an \emph{incremental variant}, which additionally supports adding single elements to certain sets in constant amortized time. In both results, all sets are restricted to be intervals of some total order. To represent the Mondshein sequence $D$ in the path replacement process, we will use the following more general data structure due to Djidjev~\cite[Section~3.2]{Djidjev2006}, which does not have that requirement but still supports the add-operation.

\smallskip
The data structure maintains a collection $P$ of edge-disjoint paths under the following operations:

\begin{compactitem}
	\item[\texttt{new\_path(x,y)}:] Creates a new path that consists of the edge $xy$. The edge $xy$ must not be in any other path of $P$.
	\item[\texttt{find(e)}:] Returns the integer-label of the path containing the edge $e$.
	\item[\texttt{split(xy)}:] Splits the path containing the edge $xy$ into the two subpaths from $x$ to one endpoint and from $x$ to the other endpoint of that path.
	\item[\texttt{sub(x,e)}:] Modify the path containing $e$ by subdividing $e$ with the vertex $x$.
	\item[\texttt{replace(x,y,e)}:] Neither $x$ nor $y$ may be an endpoint of the path $Z$ containing $e$. Cut $Z$ into the subpath from $x$ to $y$ and the path that consists of the two remaining subpaths of $Z$ joined by the new edge $xy$.
	\item[\texttt{add(x,yz)}:] The vertex $y$ must be an endpoint of the path $Z$ containing the edge $yz$ and $x$ is either a new vertex or not in $Z$. Add the new edge $xy$ to $Z$.
\end{compactitem}
\bigskip

Note that all ears are not only edge-disjoint but also internally disjoint. Djidjev proved that each of the above operations can be computed in amortized constant time~\cite[Theorem~1]{Djidjev2006}. We will only represent long ears in this data structure; the remaining short ears do not contain any essential birth-value information and can therefore be maintained simply as edges. As the data structure can only store paths, we need to clarify how the unique cycle $P_0$ in $D$ can be maintained: We store $P_0$ as paths, namely as the two paths in $P_0$ with endpoints $r$ and $t$. For every ear different from $P_0$, we store its two endpoints at its \texttt{find()}-label. These endpoints can therefore be accessed and updated in constant time.

Now we initialize the data structure with the Mondshein sequence of $K_4$ in constant time using the above operations. Every modification of the Cases~(1)--(3) and $ru \in \{ab,cd\}$ can then be realized with a constant number of operations of the data structure, and hence in amortized constant time.

Additionally, we need to maintain the order of ears in $D$. The \emph{incremental list order-maintenance problem} is to maintain a total order subject to the operations of (i) \emph{inserting} an element after a given element and (ii) \emph{comparing} two distinct given elements by returning the one that is smaller in the order. Bender et al.~\cite{Bender2002} showed a simple solution with amortized constant time per operation (which holds even if, additionally, \emph{deletions} of elements are supported); we will call this the \emph{order data structure}. It is easy to see that the Path Replacement Lemma inserts in every step at most two new ears directly after $P_{birth(b)}$ and at most one new short ear at the end of $D$. Hence, we can maintain the order of ears in $D$ by applying the order data structure to the \texttt{find()}-labels of ears; this costs amortized constant time per step.

For deciding which of the subcases in~(1)--(3) and $ru \in \{ab,cd\}$ applies, we additionally need to maintain the birth-values of the vertices and edges in $D$. In fact, it suffices to support the queries ``$birth(x) < birth(y)$'' and ``$birth(x) = birth(y)$'', where $x$ and $y$ may be arbitrary edges or vertices in $D$.
If $x$ and $y$ are edges, both queries can be computed in constant amortized time by comparing the labels \texttt{find(x)} and \texttt{find(y)} in the order data structure. In order to allow birth-queries on vertices, we will store pointers at every vertex $x$ to the two edges $e_1$ and $e_2$ that are incident to $x$ in $P_{birth(x)}$. The desired query involving $birth(x)$ can then be computed by comparing \texttt{find(e$_1$)} in the order data structure.

For any new vertex $x$ that is added to $D$, we can find $e_1$ and $e_2$ in constant time, as these are in $\{av,vb,cw,wd,vw\}$. Since $P_{birth(x)}$ may change over time, we have to update $e_1$ and $e_2$ after each step. The only situation in which $P_{birth(x)}$ may loose $e_1$ or $e_2$ (but not both) is a \texttt{split} or \texttt{replace} operation on $P_{birth(x)}$ at $x$ (the split operation must be followed by an add operation on $x$, as $x$ is always inner vertex of some ear). This cuts $P_{birth(x)}$ into two paths, each of which contains exactly one edge in $\{e_1,e_2\}$. Checking \texttt{find(e$_1$)$=$find(e$_2$)} recognizes this case efficiently. Dependent on the particular case, we compute a new consistent pair $\{e'_1,e'_2\}$ that differs from $\{e_1,e_2\}$ in exactly one edge. This allows to check the desired comparisons in amortized constant time.

We conclude that $D'$ can be computed from $D$ in amortized constant time; this proves the Path Replacement Lemma. Thus, we deduce the following theorem.

\begin{theorem}\label{thm:Linear}
Given edges $rt$ and $ru$ of a $3$-connected graph $G$, a Mondshein sequence $D$ of $G$ through $rt$ and avoiding $u$ can be computed in time $O(m)$.
\end{theorem}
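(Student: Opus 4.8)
The plan is to assemble Theorem~\ref{thm:Linear} from the machinery already in place, treating it as a constructive, bottom-up induction driven by the Barnette--Gr\"unbaum construction of $3$-connected graphs. First I would invoke Theorem~\ref{thm:BG} together with Theorem~\ref{thm:schmidt}: since $G$ is $3$-connected, it is constructible from $K_4$ by a sequence of BG-operations, and such a BG-sequence $\Gamma_1,\ldots,\Gamma_t$ (with $t = O(m)$, as each operation adds exactly one vertex and two edges) can be computed in time $O(m)$. This gives a chain of intermediate $3$-connected graphs $K_4 = H_0, H_1, \ldots, H_t = G$ where $H_j$ is obtained from $H_{j-1}$ by applying $\Gamma_j$.

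Next I would set up the induction on $j$ so as to maintain not just a Mondshein sequence but one that avoids a designated edge. The base case is $K_4$: one writes down an explicit Mondshein sequence of $K_4$ avoiding the edge $ru$ (a triangle $P_0$ through $r$, followed by the final long ear carrying $u$ as its single inner vertex, then the short ear $ru$), which trivially satisfies Definition~\ref{def:Mondshein}. For the inductive step I would apply the Path Replacement Lemma (Lemma~\ref{lem:PathReplacement}): given a Mondshein sequence $D_{j-1}$ of $H_{j-1}$ avoiding the tracked edge, and the operation $\Gamma_j$, the lemma produces a Mondshein sequence $D_j$ of $H_j$ avoiding the corresponding edge $ru'$, using only constantly many constant-time modifications. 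The one subtlety I would flag is bookkeeping the avoided edge: when $\Gamma_j$ subdivides the currently-avoided edge, the avoided edge changes to $rv$ or $rw$, so the invariant must be stated in terms of ``the edge corresponding to $ru$'' rather than a fixed edge. I would need to ensure that the very first BG-operations, starting from $K_4$, can be arranged (or the initial avoided edge chosen) so that after all $t$ operations the avoided edge is exactly the prescribed $ru$ of the input $G$; alternatively one runs the BG-sequence so that the edge $ru$ is present throughout and is the one tracked, which is the cleaner route since $ru \in E(G)$ is given.

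For the running-time accounting I would argue that the dominant cost is the $O(m)$ for the BG-sequence, and that the total cost of all updates is $\sum_{j=1}^{t} O(1) = O(t) = O(m)$, since each application of the Path Replacement Lemma costs $O(1)$ and $t = O(m)$. The space bound follows identically: the sequence has $m-n+1$ ears by Whitney's count, each update changes only a constant number of ears locally, and maintaining the sequence as an appropriate linked structure keeps the footprint $O(m)$.

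The main obstacle I expect is not in this theorem's skeleton at all but in justifying that the updates are genuinely executable in constant time in the chosen data structure. Lemma~\ref{lem:detailed} describes the modifications in terms of operations like ``replace the subpath $P_{birth(b)}[x,y]$'', ``split an ear at an inner vertex'', ``insert a new ear directly after a given ear'', and ``locate $birth(v)$, $birth(cd)$, and whether a vertex lies on $P_{birth(b)}$.'' For these to be $O(1)$ one must represent ears as doubly linked lists of edges with $O(1)$-time splits and concatenations, maintain for every vertex and edge a pointer to its position and its $birth$-ear, and be able to answer the required ``which side is $a'$ closer to'' and ``does $w$ lie between $a$ and $b$'' queries in constant time. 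I would therefore devote the bulk of the honest work to specifying this representation and verifying that every primitive invoked across all cases of Lemma~\ref{lem:detailed} reduces to a constant number of pointer updates and precomputed lookups, so that the per-operation $O(1)$ claim of the Path Replacement Lemma is realized at the implementation level and the telescoped total remains $O(m)$.
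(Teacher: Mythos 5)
Your proposal is correct and is essentially the paper's own proof: compute a BG-sequence with Theorem~\ref{thm:schmidt}, start from an explicit Mondshein sequence of $K_4$, and apply the Path Replacement Lemma (Lemma~\ref{lem:PathReplacement}, realized by the case analysis of Lemma~\ref{lem:detailed}) once per BG-operation at $O(1)$ cost, giving $O(m)$ time and space in total. The subtleties you flag---tracking the avoided edge via the correspondence $ru \mapsto ru' \in \{ru, rv, rw\}$ when the operation subdivides it, and an ear representation supporting constant-time splits, replacements and insertions---are exactly the bookkeeping that the paper delegates to the statement of Lemma~\ref{lem:PathReplacement} and its constant-time modifications $M1$--$M4$.
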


The above algorithm is \emph{certifying} in the sense of~\cite{McConnell2011}: First, check in linear time that $D$ is an ear decomposition of $G$. Second, check the side constraints on the first and last ear. Third, check in linear time that $D$ is non-separating by testing that every ear satisfies Definition~\ref{def:nonseparating}.

\section{Applications}\label{sec:applications}

\paragraph{Application 1:} Independent Spanning Trees\\
Let $k$ spanning trees of a graph be \emph{independent} if they all have the same root vertex $r$ and, for every vertex $x \neq r$, the paths from $x$ to $r$ in the $k$ spanning trees are \emph{internally disjoint} (i.e., vertex-disjoint except for their endpoints; see Figure~\ref{fig:Cr01ISTGrey}). The following conjecture from 1988 due to Itai and Rodeh~\cite{Itai1988} has received considerable attention in graph theory throughout the past decades.

\begin{conjecture}[Independent Spanning Tree Conjecture~\cite{Itai1988}]
Every $k$-connected graph contains $k$ independent spanning trees.
\end{conjecture}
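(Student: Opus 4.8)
The plan is to proceed by induction on $k$, exploiting that the conjecture is known for small values: $k=1$ is trivial, $k=2$ was established by Itai and Rodeh~\cite{Itai1988} via an $st$-numbering, and $k=3$ follows from a Mondshein sequence as shown here (and earlier by Cheriyan and Maheshwari~\cite{Cheriyan1988} and Zehavi and Itai~\cite{Zehavi1989}). The natural strategy is to find a structural decomposition of a $k$-connected graph that plays the role Mondshein's sequence plays for $k=3$: a single linear order (or nested family) of the vertices rooted at $r$ whose every prefix induces a sufficiently highly connected subgraph and whose every suffix stays connected, together with a local rule for routing one of the $k$ root-paths through each newly added vertex. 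Given such an order, I would build the $k$ trees greedily, inserting each vertex into all $k$ trees at once so that the $k$ partial root-paths remain internally disjoint; a non-separation property generalizing Definition~\ref{def:nonseparating} would guarantee that a vertex added later always has a neighbor among the not-yet-processed vertices in each tree, which is exactly what keeps the incremental construction from getting stuck.

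First I would fix the root $r$ and, for the inductive step, look for a vertex $v$ (or a small vertex set) whose removal leaves a $(k-1)$-connected graph $G'$ on which $k-1$ independent spanning trees exist by induction; I would then try to splice $v$ back in and promote the $(k-1)$-tree system to a $k$-tree system by adding one fresh tree that uses $v$ as a designated final vertex, mirroring the role of $u$ in Definition~\ref{def:Mondshein}. The hardest---and in fact currently insurmountable---step is the existence of the underlying decomposition itself. For $k=3$ this existence is precisely Theorem~\ref{CheriyanThm}, which rests on the Barnette--Gr\"unbaum--Tutte synthesis (Theorem~\ref{thm:BG}) that generates all $3$-connected graphs from $K_4$ by local BG-operations, and it is exactly this generation theorem that drives the Path Replacement Lemma. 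For $k \geq 4$ no comparable synthesis is known; Curran, Lee and Yu pushed an ad-hoc non-separating-ear-decomposition argument up to $k=4$ with considerable effort, but the method does not obviously iterate, and the conjecture remains open for every $k \geq 5$.

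Consequently, the realistic content of this proposal is a \emph{reduction} rather than a proof: I would aim to establish the conditional statement that a hypothetical ``$k$-Mondshein order''---a non-separating, prefix-highly-connected vertex ordering rooted at $r$ with one designated final vertex per tree---yields $k$ independent spanning trees (in linear time) via the incremental insertion sketched above, and I would isolate the existence of such an order as the sole missing ingredient. I expect the main obstacle to be precisely this existence question, which appears to demand a structural theory of $k$-connected graphs---an analogue of Tutte's wheel theorem or of canonical orderings---that is simply not available beyond $k=4$; absent it, any inductive tree-splicing argument has no base on which to hang the $k$-th tree. I would therefore treat the full conjecture as out of reach by these techniques, present the constructive $k=3$ case (and its linear-time realization via Theorem~\ref{thm:Linear}) as the concrete contribution, and flag the general statement as the long-standing open problem that motivates it.
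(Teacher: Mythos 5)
You have not proved the statement, but neither does the paper: the statement is an \emph{open conjecture}, stated as such, and the paper contains no proof of it. The paper's surrounding text only records the known cases ($k\leq 2$ by Itai and Rodeh, $k=3$ by Cheriyan--Maheshwari and Zehavi--Itai, $k=4$ by Curran et al., and planar graphs by Huck) and then contributes a \emph{linear-time algorithm} for the $k=3$ case via the Mondshein sequence. Your proposal correctly identifies this situation, and your final paragraph --- treating the general statement as out of reach, isolating the existence of a ``$k$-Mondshein order'' as the missing ingredient, and presenting the constructive $k=3$ case as the actual content --- is exactly aligned with the paper's stance. So there is no proof to compare against, and your honest assessment is the right one.

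Two concrete remarks on the sketch itself. First, your inductive splicing step (delete a vertex $v$, get $k-1$ trees in the $(k-1)$-connected graph $G-v$ by induction, then add a fresh tree through $v$) is precisely the kind of ``local replacement in an induction'' that the paper explicitly warns against: Figure~\ref{fig:ISTCounterexample} exhibits a $3$-connected $G$ obtained from $G'$ by one BG-operation such that two of the three independent spanning trees of $G'$ cannot both be extended to cover the new vertex. The failure is not merely that the $k$-th tree is hard to hang on; the \emph{existing} $k-1$ trees may already be unsalvageable, which is why the paper routes all structure through the Mondshein sequence rather than through the tree system itself. Second, even your conditional reduction is incomplete as stated: for $k=3$ the trees are extracted from the sequence using a consistent $st$-numbering (two trees from higher/lower neighbors inside $G_{birth(x)}$, one tree from the non-separating property in $\overline{G_{birth(x)}}$), and for general $k$ one would need $k-1$ pairwise-disjoint routes inside each prefix --- an analogue of the consistent numbering that is itself unproven, not just the existence of the ordering. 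So the ``sole missing ingredient'' is really two ingredients. None of this makes your write-up wrong --- it makes it an accurate description of why the conjecture is open for $k\geq 5$ --- but a proof it is not, and the paper claims none.
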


\begin{figure}[h!tb]
	\centering
	\includegraphics[scale=0.8]{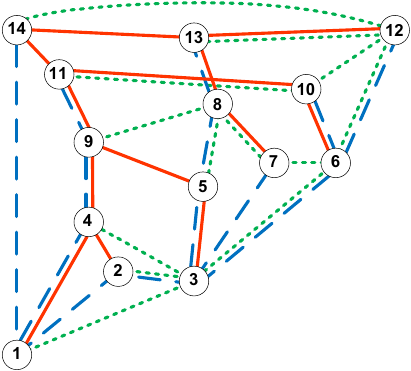}
	\caption{Three independent spanning trees in the graph of Figure~\ref{fig:SequenceCr08Paper}, which were computed from its Mondshein sequence (vertex numbers depict a consistent $tr$-numbering).}
	\label{fig:Cr01ISTGrey}
\end{figure}

The conjecture has been proven for $k \leq 2$~\cite{Itai1988}, $k=3$~\cite{Cheriyan1988,Zehavi1989} and $k=4$~\cite{Curran2006}, with running times $O(m)$, $O(n^2)$ and $O(n^3)$, respectively, for computing the corresponding independent spanning trees. For every $k \geq 5$, the conjecture is open. For planar graphs, the conjecture has been proven by Huck~\cite{Huck1999}.

We show how to compute three independent spanning trees in linear time, using an idea of~\cite{Cheriyan1988}. This improves the previous best quadratic running time. It may seem tempting to compute the spanning trees directly and without using a Mondshein sequence, e.g.\ by local replacements in an induction over BG-operations or inverse contractions. However, without additional restrictions this is bound to fail, as shown in Figure~\ref{fig:ISTCounterexample}.

\begin{figure}[h!tb]
	\centering
	\includegraphics[scale=0.8]{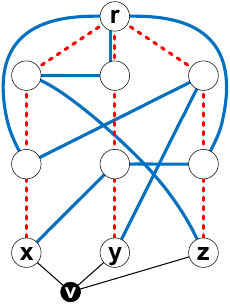}
	\caption{A $3$-connected graph $G$ (some edges are not drawn). $G$ is obtained from the $3$-connected graph $G':= (G-v) \cup xy$ by performing a BG-operation (or inverse contraction) that adds the vertex $v$ (with added edge $vy$). Two of the three independent spanning trees of $G'$ are given, rooted at $r$ (thick edges). However, not both of them can be extended to $v$.}
	\label{fig:ISTCounterexample}
\end{figure}

Compute a Mondshein sequence through $rt$ and avoiding $u$, as described in Theorem~\ref{thm:Linear}. Choose $r$ as the common root vertex of the three spanning trees and let $x \neq r$ be an arbitrary vertex.

First, we show how to obtain two internally disjoint paths from $x$ to $r$ that are both contained in the subgraph $G_{birth(x)}$. A \emph{$tr$-numbering} $<$ is a total order $v_1 < \cdots < v_n$ of the vertices of a graph such that $t = v_1$, $r = v_n$, and every other vertex has both a higher-numbered and a lower-numbered neighbor. Let a $tr$-numbering $<$ be \emph{consistent}~\cite{Cheriyan1988} to a Mondshein sequence if $<$ is a $tr$-numbering for every graph $G_i$, $0 \leq i \leq m-n$. We can compute a consistent $tr$-numbering $<$ in linear time as follows: Let $<_0$ be the total order on $V(P_0)$ from $t$ to $r$; then $<_0$ is a consistent $tr$-numbering of $G_0$. We maintain $<_{i-1}$ in the \emph{order data structure} of~\cite{Bender2002} (see the computational complexity paragraph). Now we add iteratively the next ear $P_i$ and obtain $<_i$ from $<_{i-1}$ by ordering the new inner vertices of $P_i$ from the lower to the larger endpoint of $P_i$ in $<_{i-1}$ (such that $inner(P_i)$ is between these endpoints in $<_i$). This takes amortized time proportional to the length of $P_i$ and, hence, gives a total linear running time.

According to $<$, every vertex $x \neq r$ has a higher-numbered neighbor in $G_{birth(x)}$ and every vertex $x \notin \{r,t\}$ a lower-numbered neighbor in $G_{birth(x)}$. Fixing arbitrary such neighbors, the first two spanning trees $T_1$ and $T_2$ then consist of the incident edges to higher neighbors and of the edge $tr$ and the incident edges to lower neighbors, respectively. Clearly, $T_1$ and $T_2$ are independent due to the numbering used.

We construct the third independent spanning tree $T_3$. As a Mondshein sequence is non-separating, every vertex $x \neq \{r,u\}$ has an incident edge with an endpoint in $\overline{G_{birth(x)}}$ (as seen before, iterating this argument gives a path to $u$ in $\overline{G_{birth(x)}}$). Let $T_3$ consist of arbitrary such incident edges and of the edge $ru$. Since $G_{birth(x)}$ and $\overline{G_{birth(x)}}$ are vertex-disjoint, $T_3$ is independent from $T_1$ and $T_2$.

\begin{remark}
We remark that the three independent spanning trees constructed this way satisfy the following additional condition: Due to the fact that $T_2$ and $T_3$ are extended to $r$ by one single edge, all incident edges of $r$ are contained in at most one of $T_1,T_2,T_3$. In particular, no edge of $G$ is contained in all three independent trees, which is a fact that cannot be derived from the definition of independent spanning trees (an edge that is incident to $r$ may be contained in all three trees).
\end{remark}

\paragraph{Application 2:} Output-Sensitive Reporting of Disjoint Paths\\
Given two vertices $x$ and $y$ of an arbitrary graph, a \emph{$k$-path query} reports $k$ internally disjoint paths between $x$ and $y$ or outputs that these do not exist. Di Battista, Tamassia and Vismara~\cite{DiBattista1999} give data structures that answer $k$-path queries for $k \leq 3$. A key feature of these data structures is that every $k$-path query has an \emph{output-sensitive} running time, i.e., a running time of $O(\ell)$ if the total length of the reported paths is $\ell$ (and running time $O(1)$ if the paths do not exist). The preprocessing time of these data structures is $O(m)$ for $k \leq 2$, but $O(n^2)$ for $k=3$.

For $k=3$, Di Battista et al.\ show how the input graph can be restricted to be $3$-connected using a standard decomposition. For every $3$-connected graph we can compute a Mondshein sequence, which allows us to compute three independent spanning trees $T_1$--$T_3$ in a linear preprocessing time, as shown in Application~1. If $x$ or $y$ is the root $r$ of $T_1$--$T_3$, this gives a straight-forward output-sensitive data structure that answers $3$-path queries: we just store $T_1$--$T_3$ and extract one path from each tree per query.

In order to extend these queries to $k$-path queries between arbitrary vertices $x$ and $y$, \cite{DiBattista1999} gives a case distinction that shows that the desired paths can be found efficiently in the union of the six paths in $T_1$--$T_3$ that join either $x$ with $r$ or $y$ with $r$. This case distinction can be used for the desired output-sensitive reporting in time $O(\ell)$ without changing the preprocessing. We conclude that the preprocessing time of $O(n^2)$ for allowing $k$-path queries with $k \leq 3$ in arbitrary graphs can be improved to $O(n+m)$.

\paragraph{Application 3:} Planarity Testing\\
We give a conceptually very simple planarity test based on Mondshein's sequence for any $3$-connected graph $G$ in time $O(n)$. The $3$-connectivity requirement is not crucial, as the planarity of $G$ can be reduced to the planarity of all $3$-connected components of $G$, which in turn are computed as a side-product from the computation of the BG-sequence~\cite[Appendix~2]{Mehlhorn2015}. Alternatively, one could also use standard algorithms~\cite{Hopcroft1973,Gutwenger2001} for reducing $G$ to be $3$-connected.

If $m > 3n-6$, $G$ is not planar due to Euler's formula and we reject the instance, so let $m \leq 3n-6$.
Let $rt$ be an edge of $G$. We will find an embedding whose outer face is left of $rt$, unless $G$ is non-planar. Due to Whitney~\cite{Whitney1932}, this embedding is unique. In light of Observation~\ref{obs:MondsheinIsCanonical}, we need to pick an edge $ru \neq rt$ such that $tr \cup ru$ is in a non-separating cycle. We can easily find such an edge by computing a Mondshein sequence through $rt$ and avoiding some vertex $u' \notin \{r,t\}$, and then taking the edge that is incident to $r$ in $P_0-rt$ (alternatively, any linear-time algorithm that computes a non-separating cycle containing $rt$ like the one in~\cite{Cheriyan1988} can be used).

Now we compute a Mondshein sequence $D$ through $rt$ and avoiding $u$ that satisfies Property~\ref{lem:canonical}.7 in time $O(n)$. If $G$ is planar, Observation~\ref{obs:MondsheinIsCanonical} ensures that $D$ is a canonical ordering of our fixed embedding; in particular, the last vertex $u$ and the edge $rt$ will be embedded in the outer face. Due to Property~\ref{lem:canonical}.7, $P_0$ has no chords and every short ear $xy$ satisfies $birth(x) \neq birth(y)$. For the embedding process, we rearrange the order of short ears in $D$ such that all short ears $xy$ with $birth(x) < birth(y)$ are direct successors of the long ear $P_{birth(y)}$ (this can be done in linear time using bucket sort).

We start with a planar embedding $M_0$ of $P_0$. Step by step, we attempt to augment $M_i$ with the next long ear $P_j$ in $D$ as well as all short ears directly succeeding $P_j$ in order to construct a planar embedding $M_j$ of $G_j$.

Once the current embedding $M_i$ contains $u$, we have added all edges of $G$ and are done. Otherwise, $u$ is contained in $\overline{G_i}$, according to Definition~\ref{def:canonicalordering}.2. Then $\overline{G_i}$ contains a path from each inner vertex of $P_j$ to $u$, according to Lemma~\ref{lem:connected}. Since $u$ is contained in the outer face of the unique embedding of $G$, adding the long ear $P_j$ to $M_i$ can preserve planarity only when it is embedded into the outer face $f$ of $M_i$. Thus, we only have to check that both endpoints of $P_j$ are contained in $f$ (this is easy to test by maintaining the vertices of the outer face). For the same reason, the short ears directly succeeding $P_j$ can preserve planarity only if the set $S$ of their endpoints in $G_i$ is contained in $f$. Note that, if there is at least one such short ear, $P_j$ has precisely one inner vertex $v$ due to Property~\ref{lem:canonical}.7 and all short ears directly succeeding $P_j$ have $v$ as endpoint.

Thus, if the endpoints of $P_j$ and $S$ are contained in $f$, we embed $P_j$ and the short ears into $f$ in the only possible way, i.e.\ as a path or as one new vertex $v$ with the short ears and the two edges of $P_j$ as incident edges. Otherwise, we output ``not planar''. If desired, a Kuratowski-subdivision can then be easily extracted in time $O(n)$, as shown in~\cite[Lemma~5]{Schmidt2013b} (the extraction is even simpler, as we do not make use of adding ``claws'').

\paragraph{Application 4:} Contractible Subgraphs in 3-Connected Graphs\\
A connected subgraph $H$ of a $3$-connected graph $G$ is called \emph{contractible} if contracting $H$ to a single vertex generates a $3$-connected graph. It is easy to show that a connected subgraph $H$ is contractible if and only if $G-V(H)$ is $2$-connected. While many structural results about contractible subgraphs are known in graph theory, we are not aware of any non-trivial result that computes them.

Using a Mondshein sequence, we can identify a nested family of $m-n$ contractible induced subgraphs in linear time, namely the subgraphs $\overline{G_i}$ for every $0 \leq i < m-n$. Clearly, these subgraphs are contractible, as $G - \overline{G_i}$ is $2$-connected due to Lemma~\ref{lem:canonical}.5. Moreover, for each $i > 0$, $\overline{G_i}$ is an induced subgraph of the induced subgraph $\overline{G_{i-1}}$. In particular, every $\overline{G_i}$ contains $u$, since $\overline{V_{m-n-1}} = \{u\}$ due to Definition~\ref{def:Mondshein}.2.

\paragraph{Application 5:} The \emph{k-Partitioning} Problem\\
Given vertices $a_1,\ldots,a_k$ of a graph $G$ and natural numbers $n_1,\ldots,n_k$ with $n_1+\cdots+n_k=n$, we want to find a partition of $V$ into sets $A_1,\ldots,A_k$ with $a_i \in A_i$ and $|A_i| = n_i$ for every $i$ such that every set $A_i$ induces a connected graph in $G$. We call this a \emph{$k$-partition}.

If the conditions $a_i \in A_i$ are ignored, the problem becomes NP-hard even for $k=2$ and bipartite input graph $G$~\cite{Dyer1985}; although often stated otherwise, this does not seem to imply an NP-hardness proof for the $k$-partitioning problem directly. If the input graph is $k$-connected, however, Gy{\"o}ri~\cite{Gyori1981} and Lov\'asz~\cite{Lovasz1977} proved that there is always a $k$-partition. Thus, let $G$ be $k$-connected. If $k=2$, the $k$-partitioning problem is easy to solve: If $G$ does not contain the edge $a_1a_2$, add this edge to $G$. Compute an $a_1a_2$-numbering $a_1=v_1,v_2,\ldots,v_n=a_2$ and observe that, for any vertex $v_i$ (in particular for $v_{n_1}$), the graphs induced by $\{v_1,\ldots,v_i\}$ and by $\{v_{i+1},\ldots,v_n\}$ are connected. For every $k \geq 4$, the $k$-partitioning problem on a $k$-connected input graph is not even known to be in $P$ (although its decision variant is), so we will focus on the $3$-partitioning problem of a $3$-connected input graph.

This problem can be solved in quadratic time~\cite{Suzuki1990} and, if the graph is additionally \emph{planar}, even in linear time~\cite{Jou1994}. As suggested in~\cite{Wada1993,Awal2010}, the problem (as well as a related extension) can be solved with the aid of a non-separating ear decomposition. For planar graphs, it thus suffices with Observation~\ref{obs:MondsheinIsCanonical} to compute just a canonical ordering, which simplifies previous algorithms considerably.

More generally, we get the first $O(m)$ time algorithm for arbitrary $3$-connected graphs as follows. Consider a Mondshein sequence through $a_1a_2$ and avoiding $a_3$ (if the edges $a_1a_2$ and $a_1a_3$ do not exist in $G$, we add them in advance). If $G_i$ contains exactly $n_1+n_2$ vertices for some $i$, we set $A_3 := \overline{G_i}$ and compute $A_1$ and $A_2$ by solving the $2$-partitioning problem on $G_i$ in linear time using an $a_1a_2$-numbering, as described above. Otherwise, let $P_i$ be the first ear such that $|V(G_i)| > n_1+n_2$.

We partition $inner(P_i)$ into the vertex sets $B_1$, $B_3$ and $B_2$ (designated to be part of $A_1$, $A_3$ and $A_2$, respectively) of three consecutive paths in $P_i-a_1a_2$ such that $|B_3|=n_3-|\overline{V_i}|$. In particular, $0 < |B_3| < |inner(P_i)|$. Let $l := |B_1|+|B_2|$; then there are $l+1$ choices for $B_3$. For any such choice, setting $A_3 := B_3 \cup \overline{V_i}$ satisfies the claim for $A_3$, as $A_3$ contains $a_3$, has cardinality $n_3$ and is connected, as a Mondshein sequence is non-separating.

We specify how to compute $B_1$; this determines the sets $B_3$ and $B_2$. If $i=0$, choose $B_1$ as the path in $P_0-a_1a_2$ that starts at $a_1$ and consists of $n_1$ vertices. The desired $2$-partition of $G-A_3$ is then given by $A_1 := B_1$ and $A_2 := B_2$. If $i > 0$, we aim for a coloring of $G_{i-1}$ into \emph{blue} and \emph{red} vertices such that $A_1$ consists of $B_1$ and the blue vertices, and $A_2$ consists of $B_2$ and the red vertices. In order to make $A_1$ connected, we have to prevent that both endpoints of $G_{i-1}$ are colored red as long as $|B_1| > 0$. Clearly, $|B_1| < n_1$, as $a_1$ has to be in $A_1$; similarly, $|B_2| < n_2$, which implies $|B_1| > l-n_2$. Hence, the valid choices for $|B_1|$ are between $max\{0,l-n_2+1\}$ and $min\{l,n_1-1\}$.

For every $max\{0,l-n_2+1\} \leq |B_1| \leq min\{l,n_1-1\}$, we compute a 2-partition of $G_{i-1}$ into $n_1-|B_1|$ blue and $n_2-|B_2|$ red vertices. The first 2-partition for $|B_1| = max\{0,l-n_2+1\}$ can be computed in linear time using an $a_1a_2$-numbering as described above. For each increase of $|B_1|$ by one, we can construct the new 2-partition in constant time from the old one, as exactly one blue vertex is recolored red. If the coloring of one of these choices for $|B_1|$ colors the endpoints $x$ and $y$ of $P_i$ differently, we choose $B_1$ as the path in $P_i$ next to the blue endpoint that consists of $|B_1|$ vertices. Then $A_1$ and $A_2$ as stated above give a 3-partition.

Otherwise, $x$ and $y$ have always the same color. Moreover, this color is identical, say red by symmetry, for every computed choice of $|B_1|$, since only one vertex is recolored per increase of $|B_1|$. Consider the smallest choice $|B_1| := max\{0,l-n_2+1\}$. As $x$ and $y$ are red, $n_2 - |B_2| \geq 2$, which implies $|B_1| > l-n_2+1$. Hence, $|B_1|=0$ and we choose $B_1 := \emptyset$. Then $A_1$ and $A_2$ as stated above give the desired 3-partition.

\paragraph{Acknowledgments.} I wish to thank Joseph Cheriyan for valuable hints, the anonymous person who drew my attention to Lee F.\ Mondshein's work, David R.\ Wood for suggesting the graph in Figure~\ref{fig:ISTCounterexample}, and the anonymous reviewers that gave me very valuable feedback, which led to a reduction of the number of cases in the Path Replacement Lemma.

\bibliographystyle{abbrv}
\bibliography{../../Jens}

\end{document}